\newtheorem{prop}{Proposition}
\newtheorem{mydef}{Definition}
\newcommand{\compas}{{\tt ComPAS}}
\newcommand\TODO[1]{\textcolor{red}{#1}}
\newenvironment{function1}[1][htb]
  {
   \begin{algorithm}
  }{\end{algorithm}}
\newenvironment{function2}[1][htb]
  {
   \begin{algorithm}
  }{\end{algorithm}}  
  \newenvironment{function3}[1][htb]
  {
   \begin{algorithm}
  }{\end{algorithm}}
\newenvironment{function4}[1][htb]
  {
   \begin{algorithm}
  }{\end{algorithm}}  
\newenvironment{function5}[1][htb]
  {
   \begin{algorithm}
  }{\end{algorithm}}
  \newenvironment{function6}[1][htb]
  {
   \begin{algorithm}
  }{\end{algorithm}}
  \newenvironment{function7}[1][htb]
  {
   \begin{algorithm}
  }{\end{algorithm}}
  \newenvironment{function8}[1][htb]
  {
   \begin{algorithm}
  }{\end{algorithm}}
\newcommand{\noteng}[1]{\textcolor{red}{NG: #1}}
\begin{document}

\title{ComPAS: Community Preserving Sampling\\ for Streaming Graphs}  




%
\author{Sandipan Sikdar}
\affiliation{%
  \institution{IIT Kharagpur}
}
\email{sandipansikdar@cse.iitkgp.ernet.in}
\author{Tanmoy Chakraborty}
\affiliation{%
  \institution{IIIT Delhi}
}
\email{tanmoy@iiitd.ac.in}
\author{Soumya Sarkar}
\affiliation{%
  \institution{IIT Kharagpur}
}
\email{soumya@iitkgp.ac.in}
\author{Niloy Ganguly}
\affiliation{%
  \institution{IIT Kharagpur}
}
\email{niloy@cse.iitkgp.ernet.in}

\author{Animesh Mukherjee} 
\affiliation{%
 \institution{IIT Kharagpur}
}
\email{animeshm@cse.iitkgp.ernet.in}
%
%
%
%

\begin{abstract}  
In the era of big data, graph sampling is indispensable in many settings. Existing sampling methods are mostly designed for static graphs, and aim to preserve {\em basic} structural properties of the original graph  (such as degree distribution, clustering coefficient etc.) in the sample. We argue that for any sampling method it is impossible to produce an universal representative sample which can preserve {\em all} the properties of the original graph; rather sampling should be application specific (such as preserving hubs - needed for information diffusion). Here we consider {\em community detection} as an application scenario. We propose \compas, a novel sampling strategy that unlike previous methods, is not only designed for {\em streaming graphs} (which is a more realistic 
representation of a real-world scenario) 
but also {\em preserves the community structure} of the original graph in the sample. 
Empirical results on both synthetic and different real-world graphs show that \compas~is the best to preserve the underlying community structure with average performance reaching 73.2\% of the most informed algorithm for {\em static} graphs. 

\end{abstract}

%

\keywords{Streaming graph; Sampling; Community detection}  

\maketitle


\section{Introduction}

One of the fundamental techniques to analyze very large-scale graphs is through sampling~\cite{leskovec2006sampling}, especially where the analysis on the entire graph is intractable (and often impractical).
A good sampling method should usually target a specific application and essentially preserve a set of (not all) properties of the original graph geared toward the application. For instance, a sampling method
designed for information diffusion should preserve the hubs (high-degree nodes) in the sample; whereas, a sampling scheme for outbreak detection (such as disease outbreak) should preserve the nodes with high local clustering coefficient.
Sampling has been studied extensively in the context of {\em static graphs}~\cite{leskovec2006sampling,gjoka2010walking,maiya2010sampling,rasti2009respondent,ribeiro2010estimating}; 
however, there has been very limited work on sampling from {\em streaming graphs}~\cite{henzinger1998computing} where nodes/edges arrive in discrete time intervals and only a part of the entire graph is available for analysis at any point of time~\cite{aggarwal2011outlier,ahmed2014network,lim2015mascot,de2016tri}. 

Existing graph sampling methods are mostly designed for preserving simple structural properties 
(such as degree distribution, clustering coefficient etc.) of the original graph in the sample - only few works attempted
 to
preserve complex properties like community~\cite{tong2016novel,maiya2010sampling} - which may be useful for designing a wide range of applications.
For instance, in marketing, surveys often seek to construct 
samples from different communities to capture  the diversity of the population (also known as \textit{cluster sampling})~\cite{kolascyk2013statistical}. 
In this paper, we propose a novel sampling algorithm that preserves the original {\em community structure}\footnote{In this paper, we consider disjoint community structure.} of streaming graphs. Our work sharply contrasts the recently proposed Green Algorithm (GA)~\cite{tong2016novel} which, is explicitly designed to generate a sample that preserves the community structure for {\em static graphs}. 

\noindent{\bf Our contributions:} In this paper, we propose~\compas, a novel sampling algorithm on streaming graph (most realistic graph representation~\cite{aggarwal2011outlier,ahmed2014network}) that is capable of
 preserving 
 the community structure of the original graph. 
 
\compas~is designed based on a novel hypothesis that  graph sampling and community detection can be \textit{interwoven} together 
 to produce a more representative sample. In particular, our contributions in this paper are the following: 
\begin{itemize}
\item To the best of our knowledge \compas~is the first {\em community-preserving} sampling method for {\em streaming graphs}. Along with the sample nodes, \compas~also outputs the community structure of the sample that closely corresponds to the community structure of the original graph.

\item In absence of any other community preserving sampling algorithm for streaming graphs, we resort to comparing \compas~ with GA~\cite{tong2016novel} which was designed to preserve the community structure while sampling from \textit{static} graphs. Note that GA, unlike \compas, has the information of the full graph while sampling and building the community structure. Empirical evidences on synthetic and real-world graphs demonstrate that the sample generated by \compas~correctly preserves the community structure 
with average performance reaching as high as 73.2\% of GA. Further, we also compare \compas~with well-known node/edge preserving sampling methods available for streaming graphs to show that these do not automatically preserve the community structure thus necessitating the design of \compas~. 

\item We  do a detailed micro-analysis to comprehend the reasons behind superior performance of \compas.
We also show additional benefits of \compas~through an application -- 
 selection of (limited) training set for online learning. We obtain a performance that is within 90.5\% of the most informed algorithm GA available for static graphs. 

\end{itemize}


\section{Related work}
Population sampling has been studied for long in social sciences 
\cite{frank1977survey},\cite{frank1980sampling}, 
such as snowball sampling  \cite{goodman1961snowball},  respondent-driven sampling  \cite{heckathorn1997respondent}, \cite{gile2010respondent} etc. 
and most of the relevant works in this space deal with estimating global properties 
of the population (see a survey in~\cite{kolascyk2013statistical}).

\noindent{\bf Sampling from static graphs:} Availability of large-scale graph data has generated renewed interest in the sampling problem ~\cite{leskovec2006sampling,gjoka2010walking,rasti2009respondent,ribeiro2010estimating,ahmed2010reconsidering}. Following in this series are works like~\cite{maiya2010sampling} and~\cite{maiya2011benefits}. A severe limitation of these approaches is that they assume that the entire graph is present in advance (i.e., the snapshot is static) for the algorithm to produce the desired output.

 
\noindent{\bf Sampling from streaming graphs:} With increasing interest in mining and analysis of large social graphs (which are mostly dynamic in nature), 
there is a recent shift in focus toward sampling from streaming graphs.
A streaming graph corresponds to a \textit{stream of incoming edges} (see Figure  \ref{fig_algo}).~\cite{aggarwal2011outlier} proposed a 
streaming edge sampling (SE) algorithm for outlier detection.~\cite{ahmed2014network} proposed streaming node sampling (SN), 
streaming BFS (Breadth First Search, SBFS) and Partially Induced Edge Sampling (PIES) algorithms. SN and SE maintain a reservoir of nodes and edges respectively and 
insert or remove them based on a pre-defined hash function. While SBFS essentially implements simple breadth-first search on a sliding window of fixed number of 
edges in the stream, PIES leverages a partial induction of nodes and combines edge-based node sampling with the graph
induction in a single pass.
Other recent works include~\cite{lim2015mascot,de2016tri}.

\noindent{\bf The most informative baseline:} The Green Algorithm (GA)~\cite{tong2016novel} is capable of generating community structure preserving samples 
for {\bf static graph}; however, the explicit community structure is not produced as an output of the algorithm. 
This constitutes, for us, the \textit{most informative} baseline since it has to have the full original network at its disposal to decide whether to include an edge or node in the sample it constructs.
Typically the set of nodes with high clustering coefficients as well as high degree are sampled in. 
However, for a streaming graph setting, this exercise becomes  difficult as one needs to determine the importance of a node based only on its  limited arrival history. 
To this aim we incorporate a simple technique which allows \compas~to correctly identify the high fidelity (high degree and clustering coefficient)  nodes and, thereby, improve the quality of the sample. 
Moreover, we intend to create samples in such a way that the nodes thus sampled are largely connected among themselves.
{This may be specially important for problems where edge characteristics are necessary like link prediction~\cite{de2013discriminative}, epidemic flow modeling~\cite{li2013influence}, signed network friend/foe classification~\cite{leskovec2010predicting}}. 
\if{0}
In contrast, \compas~ 
decides the inclusion of nodes/edges on the fly as they arrive, i.e., \textit{without any knowledge} of the full original network for sampling fulfilling the constrain of streaming graphs. 
{Typically the nodes with high clustering coefficients as well as high degree are chosen as community centers and this could be easily identified in case of a static network. 
For streaming graph setting, this exercise becomes increasingly difficult as one needs to determine the importance of a node based only on its arrival history which too is limited (not all arrivals can be recorded). To this aim we incorporate a simple technique which allows \compas~to correctly identify the high fidelity nodes and thereby improving the quality of the sample. Our proposed framework is also intended at capturing the giant components in the graph.}
\fi

\section{Problem definition}
\label{prob_def}

We consider a graph stream $S$ represented by a set of edges $e_1$, $e_2$, $\cdots$ with each edge $e_i$ arriving at 
$i^{th}$ (discrete) time step. A graph $G$ at time $t$ is the aggregate of all the edges arriving till time $t$. 
$V$ represents the set of unique nodes present in  $G$. The community structure of $G$ is represented by $C$. We consider $G$ to be both unweighted and undirected. 


\begin{mydef}
Given a streaming graph $G$ of size $V$, our objective is to obtain a sample graph $G_s$ of size $n$ 
 such that $C$, the underlying community structure of $G$ is highly preserved in $G_s$  $($i.e., $C \sim C_s$ where $C_s$ is the community structure of $G_s)$ given the constrain that any
algorithm at any discrete time step can only utilize the information of 
last arrived $\mathcal{H}$ ($<<$ $V$, $n$) nodes (which is maintained in a buffer). 
\end{mydef}

\setlength{\textfloatsep}{-3pt}

\begin{algorithm}[!h]
\tiny
\caption{\compas: A {\bf Com}munity {\bf P}reserving Sampling {\bf A}lgorithm for {\bf S}treaming Graph}\label{alg:compas}
\KwData{$S$: Graph stream, $n$: Sample size,  $\alpha$: Initial fraction of nodes inserted, $n_d$: size of the buffer, $Algo$: a community detection algorithm}\label{algo_compas}
\KwResult{Sampled subgraph $G_s(V_s, E_s)$, $C_s$}
Initialize $G_s$: $V_s=\phi$, $E_s=\phi$\\
Create an empty buffer $\mathcal{H}$ of size $n_d$\\
Initialize buffer $\mathcal{H}$: $\mathcal{H}_c=\phi$, $\mathcal{H}_p=\phi$\\
$flag=1$, $t=0$\\
\For{$e_t$ in the graph stream $S$}{
$e_t=\{u,v\}$\\
\If{$\frac{|V_s|}{n}<\alpha \wedge e_t\notin E_s$}{\label{b:step1}
$V_s=V_s\cup u \cup v$\\
$E_s=E_s\cup e_t$\\
\textbf{Continue;}\label{e:step1}
}
\ElseIf{flag==1}{
Run $Algo$ on $G_s$ and detect community structure $C_s$\label{algo} \\
$flag$=0\\
}
\ElseIf{$u, v\in V_s$}{ 
$V_s,E_s,C_s=BothinSample(u,v,e_t,V_s,E_s,C_s)$\\
}
\ElseIf{$u,v\notin V_s \wedge u,v\in \mathcal{H}$}{
$\mathcal{H} = NodeinBuffer(u,\mathcal{H})$\\
$\mathcal{H} = NodeinBuffer(v,\mathcal{H})$
}
\ElseIf{$u \in V_s \wedge v\notin V_s \wedge v\in \mathcal{H}$}{
$\mathcal{H} = NodeinBuffer(v,\mathcal{H})$\\
}
\ElseIf{$u\in V_s \wedge v\notin V_s \wedge v\notin \mathcal{H}$}{
$V_s,E_s,C_s,\mathcal{H}=NodeisNew(v,u,V_s,E_s,\mathcal{H},C_s)$
}
\ElseIf{$u\notin V_s \wedge u\in \mathcal{H} \wedge v\notin V_s \wedge v\notin \mathcal{H}$}{
$\mathcal{H}_c(u) = \mathcal{H}_c(u) +1$\\
$V_s,E_s,C_s,\mathcal{H}=NodeisNew(v,u,V_s,E_s,\mathcal{H},C_s)$
}
\ElseIf{$u,v\notin V_s \wedge u,v\notin \mathcal{H}$}{
$V_s,E_s,C_s,\mathcal{H}=NodeisNew(u,v,V_s,E_s,\mathcal{H},C_s)$\\
$V_s,E_s,C_s,\mathcal{H}=NodeisNew(v,u,V_s,E_s,\mathcal{H},C_s)$
}
$t=t+1$
}

\Return $G_s$, $C_s$
\end{algorithm}


\section{Proposed algorithm: \compas}
\label{algorithm}

We propose \compas, a {\bf Com}munity {\bf P}reserving sampling {\bf A}lgorithm for {\bf S}treaming graphs. \compas~aims at sampling a streaming graph in such a way that its underlying community structure is preserved in the sample (Algorithm~\ref{algo_compas} and Figure~\ref{fig_algo} respectively present a pseudo-code and a toy example). 
The algorithm attempts to identify the high fidelity nodes (nodes with high degree and high clustering coefficient) and suitably determine the communities to which they belong. \\
{\bf Description of the algorithm:} To start with, \compas~keeps adding streaming edges (nodes) into the sample $G_s$ as long as a certain number  of nodes ($\alpha \cdot n$, $\alpha~<~1$) are inserted (lines \ref{b:step1}-\ref{e:step1}). 
This constitutes the warm-up knowledge for the structure. 
Once the threshold is reached, a pre-selected community detection algorithm $Algo$ is run on $G_s$ to obtain the {\bf initial community structure} (line \ref{algo}). 

\noindent{\bf Subsequent dynamics:} Henceforth, once an edge $e_t$ is picked up from the stream, \compas~inserts $e_t$ into a buffer $\mathcal{H}$ (size $n_d$) which consists of the two variables -- $\mathcal{H}_c$ and $\mathcal{H}_p$. $\mathcal{H}_c$ counts 
the number of times a node is encountered till that time\footnote{In streaming graph, an edge might appear multiple times in the stream.}, and $\mathcal{H}_p$ keeps track of the current parent of a node (i.e., the node with which it arrived last). 
This presents a crude estimation of the importance of the node, since a recurrently occurring node is probably more important compared to a node occurring only intermittently. The idea is inspired by the reservoir sampling technique introduced in \cite{ahmed2014network}.
Once the buffer $\mathcal{H}$ is full, the insertion activity triggers some chain reactions which are different at the two specific phases 
(a). when the size of $G_s$ is between $\alpha \cdot n$ and $n$  -  any incoming new node triggers the entry of a node($x$) and corresponding edge ($\mathcal {P}(x)$, $x$)  from buffer to $G_s$ and (b). when $G_s$ has already reached $n$ - at that point a node has to be removed from $G_s$ to insert the incoming node~($x$) from buffer. We eliminate the node with least degree and clustering coefficient thus  ensuring progressive
inclusion of high-fidelity node. 

\noindent{\bf Genesis of the six modules of \compas:} Considering differently, for an incoming  streaming edge $e_t~=~\{u,v\}$, each endpoint ($u,v$) could be (i) a new node, 
(ii) present in the buffer or (iii) present in the partially constructed  sample graph $G_s$. Depending on the current position of $u$ and $v$, one of the six conditions is encountered which are consequently handled by the six submodules - 
(1) both endpoints are present in the sample, (2) both are in buffer, (3) one is in sample while the other is in buffer, (4) one is in sample while the other is new, (5) one is in buffer while the other is new and (6) both are new. 
We elaborate on the submodules next. 

\noindent\textbf{(i) \underline{Both $u$ and $v$ are present in the Graph $G_s$:}} When both $u,v$ are in $V_s$, $BothinSample()$ (see Function \ref{bothsample}) is called from 
line 15 of Algorithm \ref{algo_compas}.  
The aim of this module is to place the edge in such a way that the modularity 
of the evolving sample graph ($G_s$) improves. Vis-a-vis the existing community structure, the edge $e_t$ 
can be (a). an intra-community edge (totally inside a single community) or 
(b). an inter-community edge (connecting two communities $C(u)$ and $C(v)$). 

In case of an intra-community edge (edge $\{b,d\}$ in Figure~\ref{fig_algo}), addition of $e_t$ 
increases modularity of the community according to Proposition~\ref{1}\footnote{Detailed proofs of the proposition can be found in section~\ref{appendix}}. Moreover, we also know from Proposition~\ref{2} that 
splitting of current community on addition of a new intra-community edge does not increase modularity~\cite{PhysRevE.78.046115}. 
Therefore we leave $C_s$ in its current form without any modification.

\begin{prop}\label{1}
Addition of an edge to a community $c\in C$, increases its modularity if $D_c\leq M-1$ $($where $M=|E|$ and $D_c$ is total degree of all the nodes $c$ ).
\end{prop}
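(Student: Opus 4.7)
The plan is to start from Newman's modularity contribution of community $c$,
\[
Q_c = \frac{L_c}{M} - \left(\frac{D_c}{2M}\right)^2,
\]
where $L_c$ is the number of edges with both endpoints in $c$. Adding an intra-community edge triggers the updates $L_c \to L_c+1$, $D_c \to D_c+2$ (both endpoints lie in $c$), and $M \to M+1$, so the new contribution is
\[
Q_c' = \frac{L_c+1}{M+1} - \left(\frac{D_c+2}{2(M+1)}\right)^2.
\]
The task reduces to showing $\Delta Q_c := Q_c' - Q_c > 0$ under the hypothesis $D_c \le M-1$.

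I would begin by splitting $\Delta Q_c$ into a linear part $(L_c+1)/(M+1) - L_c/M$, which telescopes cleanly to $(M - L_c)/[M(M+1)]$, and a quadratic part in $D_c$. The linear part is positive: because each intra-community edge contributes $2$ to $D_c$ while each cut edge contributes one more, one has $2L_c \le D_c \le M-1 < M$. The quadratic part, however, is generically negative on its own, so the two pieces must be combined before an overall sign can be read off. Bringing everything over the common denominator $4M^2(M+1)^2$, the numerator of $\Delta Q_c$ becomes
\[
4M(M+1)(M - L_c) + D_c^2(2M+1) - 4M^2(D_c+1).
\]

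Substituting the worst-case structural bound $L_c \le D_c/2$ yields a lower bound on this numerator depending only on $M$ and $D_c$, and a direct factorisation then recasts it as
\[
(2M - D_c)\bigl[\,2M(M - D_c) - D_c\,\bigr].
\]
Both factors are strictly positive under $D_c \le M-1$: the first because $2M - D_c \ge M+1$, and the second because $M - D_c \ge 1$ gives $2M(M-D_c) \ge 2M > D_c$. Hence $\Delta Q_c > 0$, as claimed. The main obstacle I anticipate is the algebraic bookkeeping — in particular, the clean factorisation only emerges after one uses the inequality $2L_c \le D_c$ to eliminate $L_c$ from the combined numerator; without that substitution the sign of the rational expression is not transparent. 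Once the factorisation is in hand, the remaining positivity checks on the two factors are routine.
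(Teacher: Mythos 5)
Your proof is correct and follows essentially the same route as the paper's: both put $\Delta Q_c$ over the common denominator $4M^2(M+1)^2$, eliminate the edge count via $2L_c\le D_c$, and arrive at the identical factorisation $(2M-D_c)(2M^2-2MD_c-D_c)$, whose positivity under $D_c\le M-1$ closes the argument. The only differences are presentational (your linear/quadratic split of the numerator, and your cleaner handling of the final sign check, which the paper states somewhat confusingly).
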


\if{0}
\begin{proof}
From Equation \ref{modularity}, we see the contribution of individual community $c\in C$ in modularity as: $Q_c=\frac{m_c}{M} - \frac{D_c^2}{4M^2}$. 

Addition of a new edge within $c$, the $c$'s contribution of modularity becomes:
\[
Q'_c=\frac{m_c+1}{M+1} - \frac{(D_c+2)^2}{4(M+1)^2}
\]

So the increase in modularity is $\Delta Q_c=Q'_c-Q_c$,
\[\small
\begin{split}
\Delta Q_c=&\frac{4M^2-4m_cM^2-4D_cM^2-4m_cM+2D_c^2M+D_c^2}{4(M+1)^2M^2}\\
&\geq\frac{(2M^2-2D_cM-D_c)(2M-D_c)}{4(M+1)^2M^2}\\
&\geq 0
\end{split}
\]
The equality holds if $D_c\leq M-1$. This thus implies $(2M^2-2D_cM-D_c)\geq 0$. This proves the proposition.
\end{proof}
\fi

\begin{prop}\label{2}
Addition of any intra-community edge into a community $c\in C$ would not split into smaller communities.
\end{prop}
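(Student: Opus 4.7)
The plan is to compare the modularity of keeping $c$ as a single community against that of an arbitrary hypothetical bipartition $c = c_1 \sqcup c_2$, both before and after the new intra-community edge is inserted, and show that no such bipartition can strictly increase modularity once the edge is added. Writing $m_{12}$ for the number of edges between $c_1$ and $c_2$, $D_{c_1}, D_{c_2}$ for the sums of degrees in each piece, and expanding $Q_c = m_c/M - D_c^2/(4M^2)$ together with the analogous expressions for $Q_{c_1}$ and $Q_{c_2}$, a direct calculation yields
\[
Q_{c_1} + Q_{c_2} - Q_c \;=\; -\,\frac{m_{12}}{M} \;+\; \frac{D_{c_1} D_{c_2}}{2M^2}.
\]
So ``splitting does not help'' is exactly the inequality $(\ast):\ 2M\,m_{12} \geq D_{c_1} D_{c_2}$, which holds for every bipartition of $c$ before the new edge arrives because $c \in C$ is already a community.

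Next I would case-split on the location of the endpoints $\{u,v\}$ of the new edge with respect to the bipartition: (i) both endpoints lie on the same side (say $c_1$), and (ii) the endpoints straddle the cut. Case (ii) is the easier one: the updates $m_{12}\to m_{12}+1$, $D_{c_1}\to D_{c_1}+1$, $D_{c_2}\to D_{c_2}+1$, $M\to M+1$ reduce the post-edge form of $(\ast)$ to $2 m_{12} + 2M + 1 \geq D_{c_1}+D_{c_2}$, which is immediate from $D_{c_1}+D_{c_2} = D_c \leq 2M$ (the total degree in $c$ is bounded by the total degree $2M$ in the graph).

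The hardest part will be case (i). There only the right-hand side of $(\ast)$ changes, growing from $D_{c_1} D_{c_2}$ to $(D_{c_1}{+}2)D_{c_2}$, while $m_{12}$ is unaffected; the pre-edge version of $(\ast)$ is therefore not strong enough on its own. To close the gap I would apply $(\ast)$ to a perturbed bipartition obtained by shifting the endpoint $u$ from $c_1$ to $c_2$; this yields a second linear constraint coupling $m_{12}$, $\deg(u)$ and $D_{c_2}$ which, combined with the original $(\ast)$, should supply the slack needed to upgrade to the post-edge inequality. Combining the two cases would then show that no refinement of $c$ strictly increases modularity, so a modularity-maximising procedure will not split $c$ after an intra-community edge is added; this conclusion is consistent with the modularity-landscape stability analysis in~\cite{PhysRevE.78.046115}.
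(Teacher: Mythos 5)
Your identity for the bipartition gain and the reformulation $(\ast)\colon 2M\,m_{12}\ge D_{c_1}D_{c_2}$ are both correct, and case (ii) closes exactly as you say. But there are two gaps, one minor and one fatal. The minor one: you argue only about bipartitions yet conclude that no refinement of $c$ helps; this is repairable, since the gain of a $k$-way split equals $\sum_{i<j}\bigl(-e_{ij}/M+D_{X_i}D_{X_j}/(2M^2)\bigr)$, which is half the sum of the corresponding bipartition gains, so a profitable $k$-way split forces a profitable bipartition. The fatal one is case (i), which you rightly call the hardest part but leave as a plan. That plan cannot be completed, because the target inequality $2(M+1)m_{12}\ge(D_{c_1}+2)D_{c_2}$ is not implied by the hypothesis that $c$ was a community --- it can genuinely fail. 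Take $c$ to be two $4$-cycles $c_1,c_2$ joined by a single bridge, inside a graph with $M=41$ edges in total (pad with other components). Then $m_{12}=1$, $D_{c_1}=D_{c_2}=9$, and $2Mm_{12}=82>81=D_{c_1}D_{c_2}$, so every split of $c$ strictly decreases modularity and $c$ legitimately stands as one community. Now add a chord inside $c_1$: we get $2(M+1)m_{12}=84<99=(D_{c_1}+2)D_{c_2}$, i.e.\ splitting into $c_1,c_2$ strictly increases modularity. Your proposed second constraint --- $(\ast)$ applied to the bipartition with $u$ moved across the cut --- is again only a lower bound on a different cut size, $m_{12}+e_u^{c_1}-e_u^{c_2}$, and supplies no upper bound on $D_{c_2}$ in terms of $m_{12}$, which is what case (i) needs; in the example it holds with large slack while the conclusion fails, so no combination of the two constraints can prove it.

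This is not merely a failure of your technique: the proposition is false as a general statement about modularity, and the paper's own contradiction argument has the same hole in the same place. After assuming a profitable $k$-way split post-insertion, the paper derives $\frac{\sum_{i<j}D_{X_i}D_{X_j}}{2M}<\sum_{i<j}e_{ij}<\frac{\sum_{i<j}D_{X_i}D_{X_j}}{2M}+1$ and declares a contradiction, but an integer can lie strictly inside an open unit interval with non-integer endpoints; in the example above $\sum_{i<j}e_{ij}=1$ while the lower bound is $81/82$. The statement can only be rescued under an extra quantitative hypothesis (e.g.\ the slack $2Mm_{12}\ge D_{c_1}D_{c_2}+2D_{c_2}$ for every bipartition), and an honest proof along your lines would have to make such a hypothesis explicit in case (i).
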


\if{0}
\begin{proof}
We will prove this proposition by contradiction.
Assume that once a new intra-community edge is added into $c$, it gets split into $k$ small modules, namely $X_1$, $X_2$, $\cdot$,$X_k$. Let $D_{X_i}$ and $e_{ij}$ be the total degree of nodes inside $X_i$ and number of edges connecting $X_i$ and $X_j$ respectively.

Before adding the edge, we have $Q_c \geq \sum_{i=1}^k Q_{X_i}$ (where $Q_c$ is the total modularity of community $c$), because otherwise all $X_i$s can be split earlier, which is not in this case. This implies that: $\frac{m_c}{M}- \frac{D_c^2}{4M^2} > \sum_{i=1}^k (\frac{m_{X_i}}{M} - \frac{D_{X_i}^2}{4M^2})$. Since $X_1,X_2,\cdot,X_k$ are all disjoint modules of $c$, $D_c=\sum_{i=1}^k D_{X_i}$ and $m_c=\sum_{i=1}^k m_{X_i} + \sum_{i<j} e_{ij}$. This further implies that:
$\sum_{i<j} e_{ij} > \frac{\sum_{i<j} D_{X_i}D_{X_j}}{2M}
$.
 
 Without loss of generality, let us assume that the new edge is added inside $X_1$. 
Since we assume that after adding the new edge into $c$, it gets split into $k$ small modules, the modularity value should increase because of the split. Therefore, 
\[\small
\begin{split}\small
& Q'_c < \sum_{i=1}^k Q_{X_i}\\
& \Leftrightarrow \frac{\sum_{i=1}^k m_{X_i} + \sum_{i<j} e_{ij} + 1}{M+1} - \frac{(\sum_{i=1}^k D_{X_i +2})^2}{4(M+1)^2}\\
& < \frac{\sum_{i=1}^k m_{X_i}+1}{M+1} - \frac{(D_{X_1}+2)^2}{4(M+1)^2} - \sum_{i=2}^k \frac{D_{X_i}^2}{4(M+1)^2}\\
&\Leftrightarrow \sum_{i<i} e_{ij} < \frac{\sum_{i=1}^k D_{X_i} - 2 D_{X_1} + \sum_{i<j} D_{X_i}D_{X_j}}{2(M+1)}
\end{split}
\]
Since $\sum_{i=1}^k D_{X_i} - 2D_{X_1} < 2M$, this implies that 
\[\small
\begin{split}
\frac{\sum_{i<j}D_{X_i}D_{X_j}}{2M}  < \sum_{i<j}e_{ij} 
&< \frac{\sum_{i=1}^k D_{X_i} - 2D_{X_1} + \sum_{i\neq j}{D_{X_i}D_{X_j}}}{2(M+1)}\\
& <\frac{\sum_{i<j}D_{X_i}D_{X_j}}{2M}+1
\end{split}
\]
Therefore, the proposition holds.
\end{proof}

\fi
In case of $e_t$ connecting two different communities (edge $\{b,f\}$ in Figure \ref{fig_algo}), three possibilities may arise - \\
(i) $u$ may leave its current community and join $v$'s community, (ii) $v$ may leave its current community and join $u$'s community and (iii) $u$ and $v$ may leave their current communities and together form a new community. In addition, if the community membership of $u$ (or $v$) is changed, this can also pull out its neighbors to join with it, and some of the neighbors might eventually want to change their memberships as well~\cite{pone.0091431}. To decide we first calculate $\Delta Q(u,C(u),C(v))$ (where $\Delta Q(x,C(x),C(y))$ indicates the change in modularity after assigning $x$ from $C(x)$ to $C(y)$) (case (i)), $\Delta Q(v,C(v),C(u))$ (case (ii)) and $\Delta Q(\{u,v\},\{C(u),C(v)\},C^{\ast})$ ($u$ and $v$ change their current communities to form a new community $C^{\ast}$, case (iii)) and select the case where the change in modularity is maximum. Consequently we let the neighbors (of the node whose community membership is altered by the above action) decide their best move in the similar way. This continues recursively (neighbors of neighbors) until the modularity stabilizes or decreases.  

\begin{figure}[!t]
\centering
\includegraphics[width=\columnwidth]{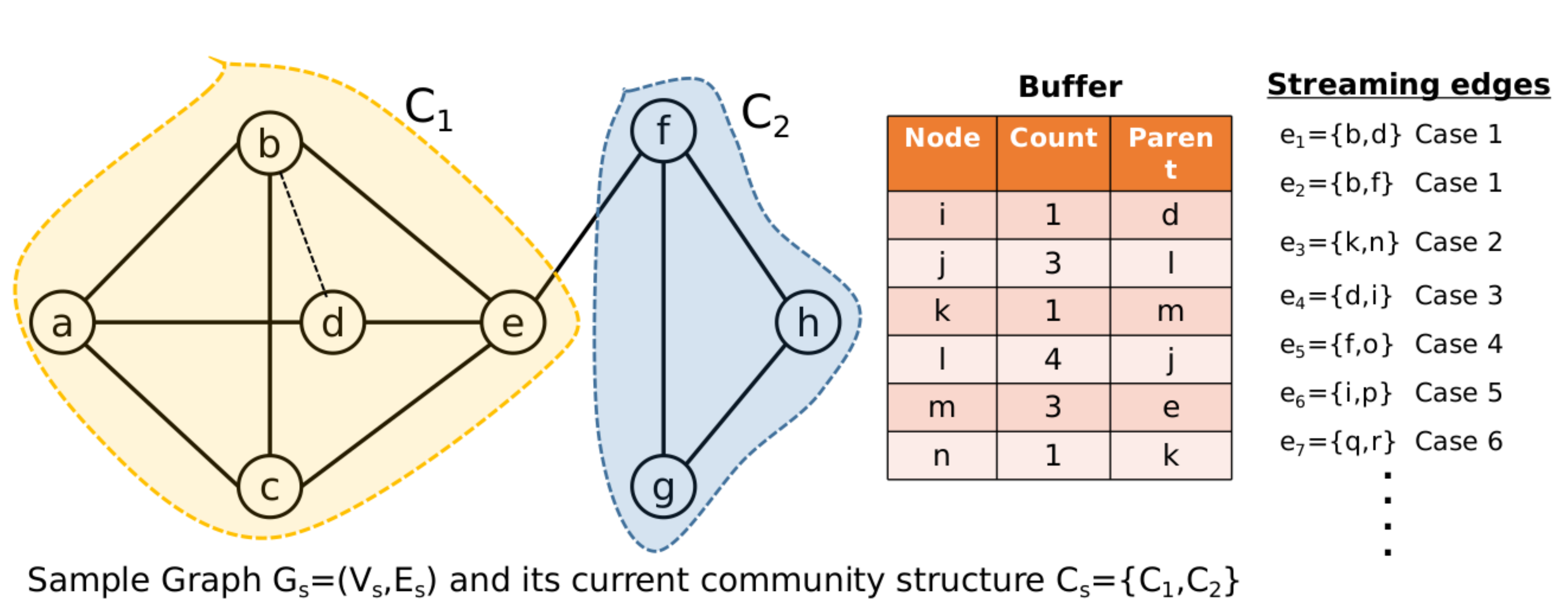}
\vspace{-5mm}
\caption{Toy example depicting various conditions handled by \compas~ when a streaming edge arrives.}\label{fig_algo} 
\vspace{2mm}
\end{figure}


\if{0}
\begin{prop}\label{3}
If a new inter-community edge $(u,v)$ connecting two communities $C(u)$ and $C(v)$ is introduced, $C(u)$ $($or $C(v))$ is the most likely candidate for $v$ (or $u$) if it changes its membership.
\end{prop}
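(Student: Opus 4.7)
The plan is to reduce the statement to a first-order perturbation argument on the standard Louvain-style decomposition of the modularity change. When $v$ moves from $C(v)$ to an arbitrary community $c \neq C(v)$, the gain $\Delta Q(v, C(v)\to c)$ splits into a removal term that depends only on $C(v)$ (hence is independent of $c$) and an insertion term that depends only on $c$. After the edge has arrived, the latter is — up to a $c$-independent constant — the quantity $g(c) = \tfrac{k^{\mathrm{in}}_v(c)}{M+1} - \tfrac{(k_v+1)\,D^{\ast}_c}{2(M+1)^2}$, where $k^{\mathrm{in}}_v(c)$ denotes the number of $v$'s edges going into $c$, $D^{\ast}_c$ the total degree of $c$, and the primes denote post-arrival values (so $M^{\ast} = M+1$ and $k^{\ast}_v = k_v + 1$). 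Identifying the community $v$ would move to therefore reduces to comparing $g(c)$ across all $c \neq C(v)$.

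\textbf{Effect of the new edge.} Next I would examine how the insertion of $(u,v)$ perturbs $g$. For $c = C(u)$ the quantities $k^{\mathrm{in}}_v(c)$ and $D_c$ each jump by exactly one, so that $g^{\mathrm{new}}(C(u)) - g^{\mathrm{old}}(C(u))$ contains the first-order term $+\,\tfrac{1}{M+1}$, with the remaining corrections of order $1/M^2$. For every other community $c \neq C(u),\,C(v)$, neither $k^{\mathrm{in}}_v(c)$ nor $D_c$ changes, so $g^{\mathrm{new}}(c) - g^{\mathrm{old}}(c)$ consists only of the $O(1/M^2)$ reshuffling caused by $M \mapsto M+1$ and $k_v \mapsto k_v+1$. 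Invoking the fact that $v$'s placement in $C(v)$ before the edge arrived was already modularity-optimal, i.e.\ $g^{\mathrm{old}}(C(v)) \geq g^{\mathrm{old}}(c)$ for every $c$, one concludes that after the arrival the gap $g(C(v)) - g(c)$ is perturbed by $O(1/M^2)$ for every $c$ other than $C(u)$, while the gap $g(C(v)) - g(C(u))$ shrinks by the dominant amount $\tfrac{1}{M+1}$. Hence $C(u)$ is the unique community whose attractiveness to $v$ has strictly improved because of the edge $(u,v)$, so it is the community most likely to now outrank $C(v)$. By exchanging the roles of $u$ and $v$ the symmetric statement for $u$ follows.

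\textbf{Main obstacle.} The delicate point is that the proposition is intentionally hedged ("most likely candidate") rather than a strict argmax over $c$. This hedging is necessary: a community $c'$ that was already nearly tied with $C(v)$ before the edge arrived could in principle overtake $C(v)$ through the $O(1/M^2)$ reshuffling alone, independently of which endpoint $u$ lay in. What the argument actually delivers — and what I believe the proposition is claiming — is the sharper relative statement that, among all perturbations produced by the specific edge $(u,v)$, only $C(u)$ receives a first-order boost in the modularity-gain function, so any membership change for $v$ that is \emph{induced} by the edge $(u,v)$ is overwhelmingly in the direction of $C(u)$; making this "most likely" precise in a probabilistic model would be the main technical refinement beyond the first-order bookkeeping above.
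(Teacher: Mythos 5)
Your argument has the same skeleton as the paper's: both isolate a per-community attractiveness score for the moving endpoint, perturb that score by the arrival of $(u,v)$, and observe that only the other endpoint's community is boosted. The paper, however, does not use the Louvain insertion gain $g(c)$; it uses the ``force'' decomposition of Ye et al.~\cite{PhysRevE.78.046115}, $F^{S}_{out}(u)=e^{u}_{S}-\frac{d_u D_{outS}}{2M}$ with $D_{outS}$ the total degree \emph{outside} $S$, and it extracts more than your order-of-magnitude bookkeeping does. By direct algebra it shows that the new-minus-old value of $F^{C(v)}_{out}(u)$ is strictly positive, while for every competitor $S\notin\{C(u),C(v)\}$ the change equals $d_{outS}\bigl(\frac{d_u}{2M}-\frac{d_u+1}{2(M+1)}\bigr)$, which is strictly negative since $d_u<M$. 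That strict \emph{sign} on the competitors' change is exactly what your proof leaves open when you record the non-$C(u)$ perturbations only as $O(1/M^2)$ ``reshuffling'': the scenario you flag as the main obstacle --- a nearly tied community $c'$ overtaking $C(v)$ through the reshuffling alone --- is ruled out in the paper because every competitor's pull strictly drops while only $C(u)$'s rises. So your reading of the proposition as a relative, edge-induced statement matches the paper's intent, but the paper's choice of attractiveness function turns the competitors' decrease into a one-line exact computation rather than an asymptotic estimate, which is what lets it conclude where you hedge. To close your version you would need to compute the sign of $g^{\mathrm{new}}(c)-g^{\mathrm{old}}(c)$ for $c\neq C(u)$ explicitly rather than bounding its magnitude.
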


\begin{proof}
The method is inspired by \cite{PhysRevE.78.046115} that a vertex $u$ is influenced by two factors: $F^c_{in}(u)$ the force that keeps $u$ stay in its own community $c$, and $F^c_{out}(u)$, the force that a community $S$ imposes to $u$ in order to bring $u$ to $S$ as follows:
\[\small
F^c_{in}(u)=e_c^u-\frac{d_u(D_c-d_u)}{2M}
\]
and 
\[\small
F^S_{out}(u)=max_{S\in NC(u)} \{e_S^u - \frac{d_u D_{outS}}{2M}\}
\]
where $NC(u)$ is the set of neighboring communities of $u$, and $D_{outS}$ is the total degree of vertices outside $S$.

Now we will show that the presence of new edge $(u,v)$ will strengthen $F^{C(v)}_{out}(u)$ and weaken $F^S_{out}(u)$. In other words, we will show that $F^{C(v)}_{out}(u)$ increases while $F^S_{out}(u)$ decreases for all $S \in C \wedge S \notin \{C(u),C(v)\}$.
\[\small
\begin{split}
&F^{C(v)}_{out}(u)|_{new} - F^{C(v)}_{out}(u)|_{old}\\
&=(e_u^{C(v)}+1-\frac{(d_u+1)(d_{outC(v)}+1)}{2(M+1)} - (e_u^{C(v)} - \frac{d_ud_{outC(v)}}{2M})\\
&\geq \frac{2M+d_ud_{outC(v)}}{2(M+1)} - \frac{d_u d_{outC(v)}+d_{outC(v)}+d_u+1}{2(M+1)}\\
&>0
\end{split}
\]
Therefore $F^{C(v)}_{out}(u)$ is strengthened when a new edge $(u,v)$ is introduced. Further, for any community $S\in C \wedge S\notin \{C(u),C(v)\}$
\[
\begin{split}
& F^{S}_{out}(u)|_{new} - F^{S}_{out}(u)|_{old}\\
&= (e^S_u-\frac{(d_u+1)d_{outS}}{2(M+1)}) - (e^S_u - \frac{d_ud_{outS}}{2M})\\
&= d_{outS}(\frac{d_u}{2M} - \frac{d_u+1}{2(M+1)})<0
\end{split}
\]

This implies that $F^S_{out}(u)$ is weakened when $(u,v)$ is added. Therefore,  the proposition holds.
\end{proof}
\fi

\if{0}
\begin{prop}\label{complex}
If a new edge $(u,v)$ is added into the graph, then joining $u$ to $v$'s  community $C(v)$ will increase the modularity value  if $\Delta Q(u,C(u),C(v)) \equiv 4(M+1)(e_{C(v)}^u+1-e_{C(u)}^u)+e_{C(v)}^u(2D_{C(v)}-2d_u-e_{C(u)}^u) - 2(d_u+1)(d_u+1+d_{C(v)}-d_{C(u)})>0$.
\end{prop}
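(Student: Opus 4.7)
The plan is to prove the claim by direct computation of the modularity change and to show that it has the same sign as the expression on the right-hand side. Building on the per-community decomposition $Q = \sum_{c \in C}\left(m_c/M - D_c^2/(4M^2)\right)$ already used in Proposition~\ref{1}, I note that adding the edge $(u,v)$ and moving $u$ from $C(u)$ into $C(v)$ leaves every community other than $C(u)$ and $C(v)$ untouched, so $\Delta Q = \Delta Q_{C(u)} + \Delta Q_{C(v)}$, and the global denominator $M$ simply updates to $M+1$.

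First I would record the bookkeeping rules induced by the move. The edge total grows from $M$ to $M+1$; the degree of $u$ grows from $d_u$ to $d_u+1$; the community $C(v)$ absorbs the $e_{C(v)}^u$ pre-existing edges of $u$ into $C(v)$ together with the newly arrived edge $(u,v)$, so its internal edge count increases by $e_{C(v)}^u+1$ and its degree sum grows by $d_u+1$; symmetrically, $C(u)$ loses $e_{C(u)}^u$ internal edges and $d_u$ worth of total degree. Plugging these into $Q^{\text{new}}_{C(u)}$ and $Q^{\text{new}}_{C(v)}$ yields expressions with denominators $M+1$ and $4(M+1)^2$, while the old contributions live over $M$ and $4M^2$.

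Next I would bring $\Delta Q_{C(u)} + \Delta Q_{C(v)}$ over the common denominator $4M^2(M+1)^2$. Since this denominator is strictly positive, the inequality $\Delta Q > 0$ is equivalent to positivity of the numerator; the remaining step is a straightforward but lengthy collection of terms to confirm that the numerator factors as a positive quantity times the expression defining $\Delta Q(u,C(u),C(v))$ in the statement, so that the two inequalities coincide.

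The main obstacle is algebraic bookkeeping rather than any conceptual subtlety. One has to fix unambiguous conventions about whether $d_u$ refers to the degree before or after the edge insertion, whether $e_{C(u)}^u$ is computed relative to the pre-move state, and what distinguishes $d_{C(v)}$ from $D_{C(v)}$ in the statement (the most plausible reading being that one denotes the degree sum of $C(v)$ with $u$ included and the other without, reflecting the pre-move versus post-move configuration). Once these conventions are pinned down consistently with those used in the proof of Proposition~\ref{1}, the derivation reduces to a handful of lines of polynomial simplification directly analogous to the modularity-difference computation carried out there.
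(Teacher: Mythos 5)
The key difficulty here is not the algebra but the choice of baseline, and that is where your plan goes wrong. You propose to compare the configuration \emph{after} the edge is added and $u$ is moved (total edge count $M+1$) against the configuration \emph{before} the edge is added (total edge count $M$), putting everything over the common denominator $4M^2(M+1)^2$. The paper's (suppressed) proof instead compares two partitions of the \emph{same} post-insertion graph: it writes the condition as $Q_{C(v)+u}+Q_{C(u)-u} > Q_{C(u)}+Q_{C(v)}$ with \emph{both} sides evaluated with denominator $M+1$. That is the comparison relevant to the algorithm (Function~\ref{bothsample} is deciding where to put $u$ in the graph that already contains the new edge), and it is the only comparison that produces the stated expression: with both sides over $4(M+1)^2$, the internal-edge terms $m_{C(u)}$ and $m_{C(v)}$ cancel, and clearing the denominator yields exactly $4(M+1)(e_{C(v)}^u+1-e_{C(u)}^u)+\cdots>0$. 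Under your baseline the $m_c$ terms do not cancel (since $m_c/M \neq m_c/(M+1)$), so your numerator would retain $m_{C(u)}$ and $m_{C(v)}$ and could not reduce to the stated condition, which contains neither.

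A second, related error: with the pre-insertion graph as baseline, the claim that only $C(u)$ and $C(v)$ contribute to $\Delta Q$ is false. Every other community's contribution $m_c/M - D_c^2/(4M^2)$ also changes when $M$ becomes $M+1$, so $\Delta Q \neq \Delta Q_{C(u)}+\Delta Q_{C(v)}$ in your setup; that decomposition is only valid when the two partitions being compared live on the same graph, as in the paper's version. Your closing remarks about pinning down conventions ($d_u$ pre- versus post-insertion, $d_{C(v)}$ versus $D_{C(v)}$) correctly identify real notational sloppiness in the statement, but fixing those conventions will not rescue the derivation unless you first switch to the correct comparison: two candidate community assignments of $u$ in the graph with $M+1$ edges.
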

\fi
\if{0}
\begin{proof}
Vertex $u$ will leave its current community $C(u)$ and join $v$'s community $C(v)$ if
\[\small
\begin{split}
& Q_{C(v)+u} + Q_{C(u)-u} > Q_{c(u)} + Q_{C(v)}\\
&\Leftrightarrow \frac{m_{C(v)}+e_{C(v)}+1}{M+1} - \frac{(d_{C(v)}+d_u+2)^2}{4(M+1)^2}+\\
&\frac{m_{C(u)} - e_{C(u)}}{M+1} - \frac{(d_{C(u)}-d_u-e_{C(u)})}{4(M+1)^2} \\
& > \frac{m_{C(v)}}{M+1} - \frac{(d_{C(v)}+d_u+2)^2}{4(M+1)^2} + \frac{m_{C(u)}}{M+1} - \frac{(d_{C(u)}+1)^2}{4(M+1)^2}\\
&\Leftrightarrow 4(M+1)(e_{C(v)} +1 -e_{C(u)}) + e_{C(u)}(2d_{C(v)} - 2d_{C(u)}-e_{C(u)})\\
& - 2(d_{C(u)}+1)(d_{C(u)}+1+d_{C(v)}-d_{C(u)})>0
\end{split}\]
\end{proof}
\fi
\if{0}
\noindent{\bf Corollary 1.} {\em If the condition in Proposition \ref{complex} is not satisfied, then neither $u$ nor its neighbors should be assigned to $C(v)$.}
\fi

\begin{function1}\tiny
\caption{\small$BothinSample(u,v,e_t,V_s,E_s,C_s)$} 
\label{bothsample} 
\If{$C_s(u)==C_s(v)$}{
$E_s=E_s\cup e_t$
}
\Else{
\If{$\Delta Q (u,C_s(u),C_s(v))<0 \wedge \Delta Q (v,C_s(v),C_s(u))<0 \wedge Q(\{u,v\},\{C(u),C(v)\},C^{\ast})<0$}{
\Return{$V_s,E_s,C_s$}}
\Else{
$w = arg max \{\Delta  Q(u,C_s(u),C_s(v)),\Delta Q(v,C_s(v),C_s(u))$, \\
$Q(\{u,v\},\{C(u),C(v)\},C^{\ast})\}$\\
Move $w$ to a new community and update $C_S$\\
\For{$t\in N(w)$}{
Let $t$ decide its own community 
Update $C_s$
}
}
}
\Return $V_s,E_s,C_s$   
\end{function1}
\if{0}
\begin{function2}[!t]\tiny
\caption{\small$OneinSampleOneNew(u,v,e_t,V_s,E_s,\mathcal{H},C_s)$}
\label{onesampleonenew}
\If{$\mathcal{H}$ is not full}{
Insert $v$ to $\mathcal{H}$\\
}
\Else{
Choose a node $x$ with $\mathcal{P}(x) \in V_s$ from $\mathcal{H}$ preferentially based on $\mathcal{H}_c$\\
$V_s,E_s,C_s,\mathcal{H} = OneinSampleOneinBuffer(\mathcal{P}(x),x,V_s,E_s,\mathcal{H},C_s)$\\
}

$\mathcal{H}_c[v]=1$\\ 
$\mathcal{H}_p[v]=u$\\ 
\Return $V_s,E_s,C_s,\mathcal{H}$   
\end{function2}
\fi

\begin{function2}[!t]\tiny
\caption{\small $NodeinBuffer(u,\mathcal{H})$}
\label{nodebuffer}
$\mathcal{H}_c[x]=\mathcal{H}_c[x] + 1$\\
\Return $\mathcal{H}$
\end{function2}


\noindent\textbf{(ii) \underline{Both $u$ and $v$ are in buffer:}}
The only action (lines 16 - 18 of Algorithm 1) taken is that in the buffer $\mathcal{H}$,   $\mathcal{H}_c$ entries of $u$ and $v$ are incremented by 1 which is achieved through the function $NodeinBuffer()$ (executed twice with $u$ and $v$).
Example: (edge $\{k,n\}$ in Figure \ref{fig_algo}).

\noindent\textbf{(iii) \underline{$u$ is in sample and $v$ is in buffer:}} 
In this case (edge $\{d,i\}$ in Figure \ref{fig_algo}) also the only action (lines 19 - 20 of Algorithm 1) taken is that in the buffer $\mathcal{H}$, $\mathcal{H}_c$ entry of $v$ is incremented by 1 which is implemented through the function $NodeinBuffer()$.

\subsubsection{Entry of a new node:} In the three subsequent cases, at least one node is neither present in the buffer or the sample (new). This node triggers a rearrangement, whereby, another selected node is removed from the buffer to make space for the new node, and this selected node is inserted into the graph sample $G_s$  which further triggers a rearrangement of the sample in case it has already reached its size limit ($n$). The function $NodeisNew()$ is invoked to accomplish this task. The rearrangements that take place are described next.



\noindent\underline{Remove node from buffer:} This is triggered when $\mathcal{H}$ is full and in order to make room for the new node one of the existing nodes need to be removed from $\mathcal{H}$. To this aim 
we preferentially remove $x$ from $\mathcal{H}$ based on the counts in $\mathcal{H}_c$ with the 
additional constraint that $\mathcal{P}(x)$ is present in $G_s$. We add node $x$ and edges $\{\mathcal{P}(x)$,$x$\}, into $G_s$. This is achieved by executing the function $RemoveNodefromBuffer()$. 

\begin{function3}[!t]\tiny
 \caption{\small $RemoveNodefromBuffer(\mathcal{H},V_s,E_s,C_s)$}
 \label{removenodebuffer}
 Choose $x$ preferentially from $\mathcal{H}$ $\mathcal{P}(x) \in V_s$ \\
 Remove $x$ from $\mathcal{H}$ \\
 $InsertNodeinSample(x,\mathcal{P}(x),V_s,E_s,C_s)$ \\
 \Return  $\mathcal{H},V_s,E_s,C_s$ 
\end{function3}

\noindent\underline{Selection of node for removal from $G_s$:}
Insertion of a node into $V_s$  (obtained in the previous step), 
necessitates the removal of an existing node from the sample ($V_s$) to make space for the new entry. 
Nodes with the lowest degree in $G_s$ are candidates for deletion. Among these candidate nodes the one (say $x$) with  the lowest clustering coefficient is then removed from the $G_s$ to allow insertion of a new node (selected in the previous step). Subsequently, all the edges incident on $x$ are removed from $G_s$. The function $CheckResizeSample()$ implements this task. Finally, the selected node ($x$) is inserted into $V_s$ utilizing the function $InsertNodeinSample()$, whereby, an edge $(x,\mathcal{P}(x))$ is added to $V_s$ and $x$ is assigned the community of $\mathcal{P}(x)$. 
\vspace{-3mm}
\begin{function4}\tiny
\caption{\small$CheckResizeSample(V_s,C_s,n,m)$}
\label{resizesample}
\If{$V_s == n$}{
Remove $m$ nodes say, $u_1, u_2,\cdots, u_m$  (and all their adjacent edges) from $G_s$ having lowest degree and clustering coefficient\\
\For{$u\in \{u_1,u_2,\cdots,u_m\}$}{
$C_s \leftarrow CommunityAfterNodeRemoval(u,C_s)$
}
}
\Return $V_s,E_s,C_s$
\end{function4}
\vspace{-6mm}
\begin{function5}[!t]\tiny
 \caption{\small $InsertNodeinSample(x,\mathcal{P}(x),V_s,E_s,C_s)$}
 \label{insertnodesample}
 $V_s,E_s,C_s=CheckResizeSample(V_s,E_s,C_s,n,1)$ \\
 $V_s = V_s \cup x$ \\
 $E_s = E_s \cup \{x,\mathcal{P}(x)\}$ \\
 $C_s(x) = C_s(\mathcal{P}(x))$ \\
 Update $C_s$
 \Return $V_s,E_s,C_s$
\end{function5}

\noindent \underline{Adjust communities after removing a node:} 
Deletion of a node might keep the previous community structure unchanged, or break the community into smaller parts, or merge several communities together. The community structure $C_s$ is adjusted using  $Community -  AfterNodeRemoval()$ (Function \ref{communityadjust}) incrementally.  In the extreme,
removal of a node might render the community disconnected 
or broken into smaller parts which might further merge to the other existing communities~\cite{pone.0091431}. Here we utilize the clique percolation method~\cite{PalEtAl05} to handle this situation. In particular, when a vertex $v$ is removed from a community $C$, we place a 3-clique on one of its neighbors and let the clique percolate until no vertices in $C$ are discovered. Nodes discovered in each such clique percolation will form a community. We repeat this clique percolation from each of $v$'s neighbors until each member in $C$ is assigned to a community. For example, in Figure \ref{fig_percolation} when node $g$ is removed, we place a 3-clique on its neighbor $a$. Once the 3-clique starts percolating, it accumulates all nodes except $f$. Therefore, two new communities $\{a,b,c,d,e\}$ and $\{f\}$  emerge due to the deletion of $g$. In this way, we let the remaining nodes of $C$ choose their best communities to merge in. 
\begin{function6}\tiny
\caption{\small$CommunityAfterNodeRemoval(u,C_s)$}
\label{communityadjust}
Assume node $u$ and its adjacent edges are removed from $G_s$\\
$i=1$\\
\While{$N(u)\neq \phi$}{
$b_i$=Nodes found by a 3-clique percolation on $v\in N(u)$\\
\If{$b_i==\phi$}{
$b_i=\{v\}$
}
$C_s=C_s \cup b_i$\\
$N(u)=N(u)\setminus b_i$\\
$i=i+1$\\
}
Update $C_i$\\
\Return $C_s$   
\end{function6}

We now proceed to discuss the remaining cases. \\
\noindent\textbf{(iv) \underline{$u$ is in sample and $v$ is new:}}
In this case (handled by lines 21 - 22 in Algorithm 1) $v$ is inserted into the buffer $\mathcal{H}$ if $\mathcal{H}$ is not full. Otherwise its insertion triggers rearrangements of $\mathcal{H}$ and subsequently $V_s$. We use $NodeisNew()$ to accomplish this task. 

\noindent\textbf{(v) \underline{$u$ is in buffer and $v$ is new:}} In this case (edge $\{m,p\}$ in Figure \ref{fig_algo}), we increment the counter corresponding to $u$ and attempt to insert $v$ into $\mathcal{H}$ using the function $NodeisNew()$. 

\begin{function7}[!t]\tiny
 \caption{\small $NodeisNew(u,v,\mathcal{H},V_s,E_s,C_s)$}
 \label{nodenew}
 \If{$\mathcal{H}$ is full}{
$RemoveNodefromBuffer(\mathcal{H},V_s,E_s,C_s)$ \\
}
Insert u in $\mathcal{H}$ \\
$\mathcal{H}_c[u] = 1$\\
$\mathcal{H}_p[u] = v$ \\

\Return $\mathcal{H},V_s,E_s,C_s$
\end{function7}

\noindent\textbf{(vi) \underline{Both $u$ and $v$ are new:}}
In this case we attempt to insert both $u$ and $v$ to the buffer by executing the function $NodeisNew()$. 

Summarizing, the algorithm continuously increases the proportion of high fidelity nodes and 
improves the community structure by the following actions -- 
(a) delaying the insertion of a node to the sample allows for determining the importance of a node.\\
(b) removal of low clustering coefficient nodes from the sample ensures that only nodes with high clustering coefficient constitute the final $G_s$.  \\
(c) since all the actions are aimed at improving  modularity at every iteration, the final $G_s$ potentially will have 
 well-separated community structure.

\begin{figure}[!t]
\centering
\vspace{-4mm}
\includegraphics[width=\columnwidth]{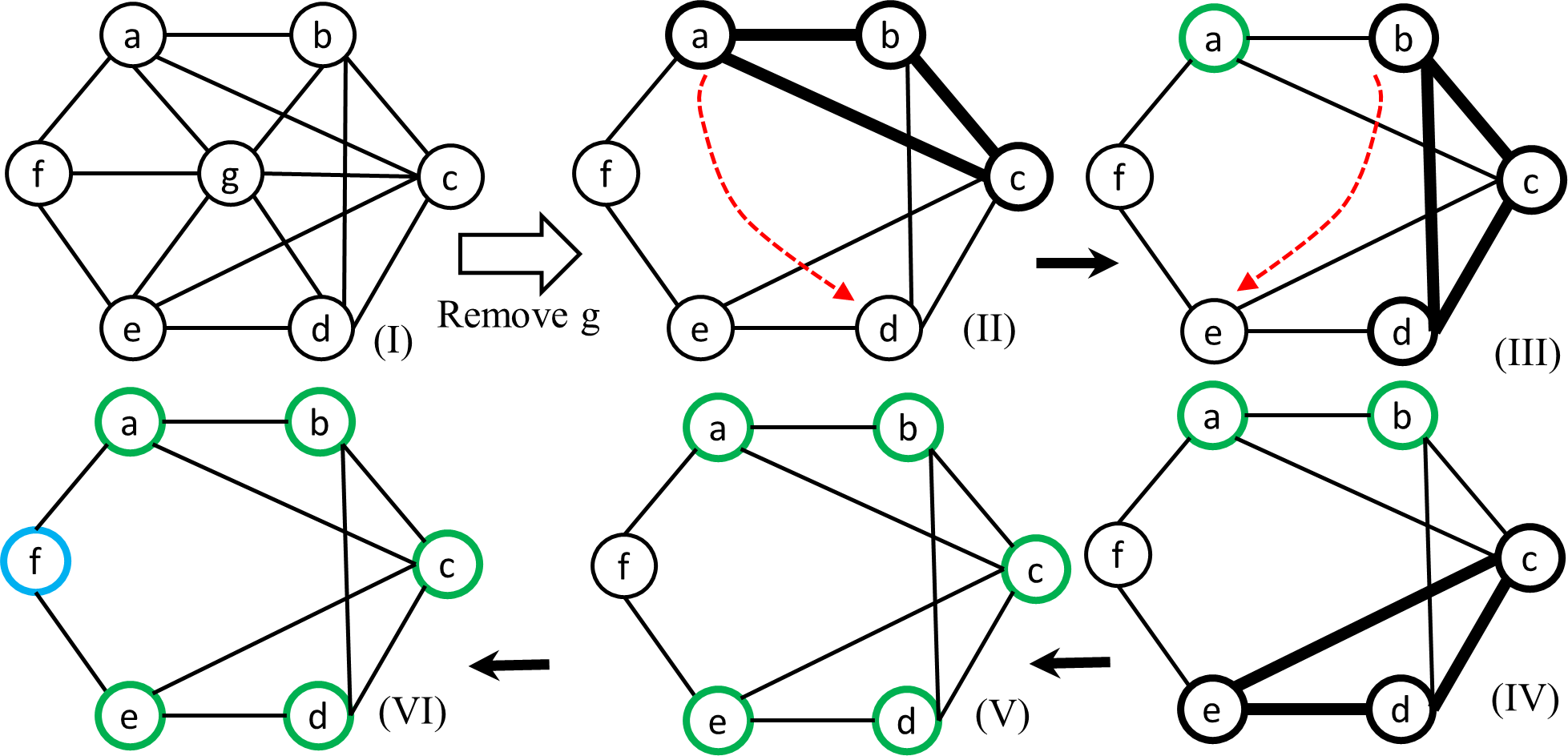}
\caption{Illustrative example of 3-clique percolation. Once node $g$ is removed, a 3-clique is placed on node $a$. The clique  percolates and accumulates all the nodes except node $f$ which forms a singleton community along with $\{a,b,c,d,e\}$.\vspace{3mm}}\label{fig_percolation} 
\end{figure}


\section{Experimental setup}
In this section, we outline the baseline sampling algorithms and the datasets used in our experiments.

\noindent{\bf Sampling algorithms:} 
We compare \compas~ with five existing sampling methods: (i) Streaming Node (SN)~\cite{ahmed2014network}, (ii) Streaming Edge (SE)~\cite{ahmed2014network}, (iii) Streaming BFS (SBFS)~\cite{ahmed2014network}, (iv) PIES~\cite{ahmed2014network}, and (v) Green Algorithm (GA)~\cite{tong2016novel}. The first four algorithms are exclusively designed for streaming graphs while the last one is designed for static graphs. Note that unlike ours, none of the existing methods explicitly produce a community structure as a by-product of the sampling,
and thus one needs to execute community detection algorithm separately on the sample to obtain the community structure. Therefore to evaluate the competing methods w.r.t how the underlying community structure in the sample corresponds to that of the original graph, for SN, SE, SBFS and PIES we run the Louvain algorithm~\cite{blondel2008fast}
\footnote{We also considered other algorithms (CNM~\cite{clauset2004finding}, GN~\cite{girvan2002community} and Infomap~\cite{rosvall2008maps}) and found the results to be similar.} 
on each individual sample and detect the communities. In case of GA, we consider the aggregated graph and run GA to obtain the sample, and further run Louvain algorithm on the sample to detect the community structure. 
Note that although the use of aggregated graph allows GA to leverage considerably more information about the graph structure, we use it as a strict baseline in this study.

\begin{table}[]
\centering
\caption{Datasets used for evaluation.}
\label{tab:data}
\begin{adjustbox}{max width=0.5\textwidth}
\begin{tabular}{l l l l l l}
\hline
Dataset  & Facebook & arxiv hep-th & Youtube   & Dblp      & LFR     \\ \hline
\# Nodes & 63,731   & 22,908       & 1,134,890 & 317,080   & 25,000  \\ 
\# Edges & 817,035  & 2,444,798    & 2,987,624 & 1,049,866 & 254,402 \\ \hline
\end{tabular}
\end{adjustbox}
\vspace{2mm}
\end{table}

\noindent{\bf Datasets:} 
We perform our experiments on the following five graphs (the first two are streaming and last three are static): \\ 
(i) {\bf Facebook}\footnote{konect.uni-koblenz.de/networks/facebook-wosn-links}: 
An undirected graph where nodes (63,731) are users, and edges (817,035) are friendship links that are time-stamped.\\ 
 (ii) {\bf arxiv hep-th}\footnote{konect.uni-koblenz.de/networks/ca-cit-HepTh}:  
Here nodes (22,908) are authors of arXiv's High Energy Physics papers and an edge exists between two authors if they have co-authored a paper; edges (2,444,798) are time-stamped by the publication date.\\ 
 (iii) {\bf Youtube}\footnote{snap.stanford.edu/data/com-Youtube.html}: Here 
nodes (1,134,890) represent Youtube users and edges (2,987,624) represent friendship. \\
(iv) {\bf dblp}\footnote{snap.stanford.edu/data/com-DBLP.html}:  
This dataset consists of authors indexed in DBLP. The graph is same as arxiv hep-th (317,080 nodes and 1,049,866 edges).\\
(v) {\bf LFR}~\cite{lancichinetti2008benchmark}: This is a synthetic graph with underlying community structure implanted into it.
We construct the graph with 25,000 nodes, 254,402 edges and 1,834 communities.\\
Since the last three graphs are static, we consider that each edge arrives in a pre-decided (random) order, i.e., each edge has a (discrete) time of arrival. The edge ordering, as we shall see, does not influence the inferences drawn from the results (Section~\ref{sec:effect}).
Moreover, since the first four graphs do not have any underlying ground-truth community structure, we run Louvain algorithm 
on the aggregated graph and obtain the disjoint community structure. This community structure is the best possible output that we can expect from our incremental modularity maximization method, and therefore serves as the ground-truth. The details of the datasets are summarized in Table~\ref{tab:data}.

\section{Evaluation}

\begin{table*}[!t]
\centering
\caption{\label{tab_all}Summary of the $D$-statistics (the lower, the better) values of the topological measures for all the datasets. For Youtube we present all the results, while for the rest we provide the average $D$-statistics and standard deviation (SD). Detailed results on other datasets can be found in \cite{si}. \compas~truns out to be the second best algorithm after GA (the most informed static graph sampling algorithm for which the sample is obtained from the aggregated graph and Louvain is run on the sample, thus serving as the strict baseline). Top two values for each average result is highlighted.}

\begin{adjustbox}{max width=\textwidth}
\begin{tabular}{l|c c c c c c c c c c c c c |c |c|c|c|c}
\hline
 \multirow{2}{*}{Algorithm}          & \multicolumn{14}{c|}{Youtube}                                                       & Facebook  & dblp & LFR & hep-th    \\ \cline{2-19}
 & ID & EI & AD & FOMD & TPR & EX & CR & CON & NC & AODF & MODF & FODF & MOD & Avg,SD & Avg,SD & Avg,SD  & Avg,SD & Avg,SD \\ \hline
\compas     & 0.063   & 0.051   & 0.078    & 0.057     & 0.227    & 0.082   & 0.054   & 0.091    & 0.260   & 0.073    & 0.201     &  0.121    & 0.052    & {\bf 0.10,0.07}   & {\bf 0.17,0.09}  &  {\bf 0.16,0.10}   & {\bf 0.18,0.06} &  {\bf 0.10,0.03 }   \\ 
SN         & 0.164   &  0.171  & 0.471   & 0.061     & 0.542    & 0.581   & 0.112   & 0.265    & 0.064   & 0.157     & 0.182     & 0.092     &  0.216   &  0.23,0.17  &       0.33,0.17  &    0.29,0.20      &  0.27,0.07 &  0.26,0.04    \\ 
SE         &  0.257  & 0.244   & 0.241   & 0.501     & 0.281    & 0.098   & 0.287   & 0.087    & 0.151   & 0.097     &  0.246    &  0.093    & 0.198    &  0.21,0.11  &       0.27,0.11  &   0.25,0.14       &   0.32,0.08  & 0.29,0.06   \\ 
SBFS       &  0.126  & 0.131   & 0.172   &  0.106    & 0.454    & 0.145   & 0.056   & 0.165    & 0.045   &   0.257   & 0.108     & 0.076     & 0.181    &  0.15,0.10  &        0.26,0.09  &  0.24,0.10        &  0.25,0.09 &  0.26,0.04    \\ 
PIES       &  0.234  & 0.241   &  0.252  & 0.190    & 0.409    & 0.042   & 0.051   & 0.049    & 0.061   &      0.157 & 0.042     & 0.053     & 0.121    &  0.14,0.10  &         0.29,0.06  &   0.24,0.07       &  0.26.0.05  &  0.21,0.05   \\ 
GA         &   0.156 & 0.055   & 0.065   & 0.053     & 0.267    & 0.066   & 0.076   & 0.053    & 0.085   &   0.150   & 0.075     &  0.069    & 0.102    &  {\bf 0.09,0.06} & {\bf 0.12,0.04}   &   {\bf 0.12,0.06}  &  {\bf 0.14,0.06} &  {\bf 0.08,0.04}    \\ \hline
\end{tabular}
\end{adjustbox}
\end{table*}

In this section, we list the standard metrics used to evaluate the goodness of the community structure, followed by a detailed comparison of the sampling algorithms.

\noindent{\bf Evaluation criteria:}
To measure how sampling algorithms capture the underlying community structure, we evaluate them in two ways.  First we measure the quality of the obtained community structure based on the {\bf topological measures} defined by \cite{yang2015defining}. In particular, we look into four classes of quality scores - (i) {\em based on internal connectivity}: internal density (ID), edge inside (EI), average degree (AD), fraction over mean degree (FOMD), triangle participation ratio (TPR); (ii) {\em based on external connectivity}: expansion (EX), cut ratio (CR); (iii) {\em combination of internal and external connectivity}: conductance (CON), normalized cut (NC), maximum out-degree fraction (MODF), average out-degree fraction (AODF), flake out-degree fraction (FODF); and (iv) {\em based on graph model}: modularity (MOD).
Note, for every individual community we obtain a score, and therefore a distribution of scores (i.e., distribution of ID, EI etc.) is obtained for all the communities of a graph. We measure how similar (in terms of Kolgomorov-Smirnov $D$-statistics\footnote{It is defined as $D = max_x\{|f(x) - f^{'}(x)|\}$ where $x$ is over the range of the random variable, and $f$ and $f^{'}$ are the two empirical cumulative distribution functions of the data.}) these distributions are with those of the ground-truth communities. {\em The lesser the value of D-statistics, the better the match between two distributions}.

\noindent{\bf Parameter estimation}:
As reported in Section~\ref{algorithm}, \compas~consists of two parameters: (i) $\alpha$ (initial fraction of nodes inserted), (ii) $n_d$ (length of the buffer).
We observe that $D$-statistics is initially high and reduces as we increase $\alpha$ (Figures \ref{param_est}(a)). For low $\alpha$, the community structure obtained initially by running a community-detection algorithm (line 12 in Algorithm 1) is coarse. For larger values of $\alpha$ even though initial community structure obtained is good, it is not allowed to evolve much. Similarly, in Figure~\ref{param_est}(b), given a small buffer size several nodes mostly arriving once would be added to the sample leading to formation of pendant vertices. As we increase the buffer size \compas~performs better till a certain point, after which the improvement is negligible. Since we are 
constrained by space, {\bf we fix $n_d$ at $0.0075 n$.} Similarly {\bf $\alpha$ is set to $0.4$.} 
We also set {\bf $n$ to $0.4|V|$} as default (see Section \ref{sec:effect} for different values of $n$). Further note that apart from Louvain we also consider other algorithms (CNM~\cite{clauset2004finding}, GN~\cite{girvan2002community} and Infomap~\cite{rosvall2008maps}) for obtaining the initial community structure. 
The average $D$-statistics values (calculated for LFR) across all the quality scores for Louvain, CNM, GN and Infomap are respectively \fbox{\textbf{0.182}, \textbf{0.191}, \textbf{0.216} and \textbf{0.197}}. 
Above results indicate that the quality of the initial communities are largely independent of the algorithm used. So we 
stick to the most popular one - Louvain for evaluation.

\begin{figure}[!h]
\vspace{-3mm}
\centering
\includegraphics[width=\columnwidth]{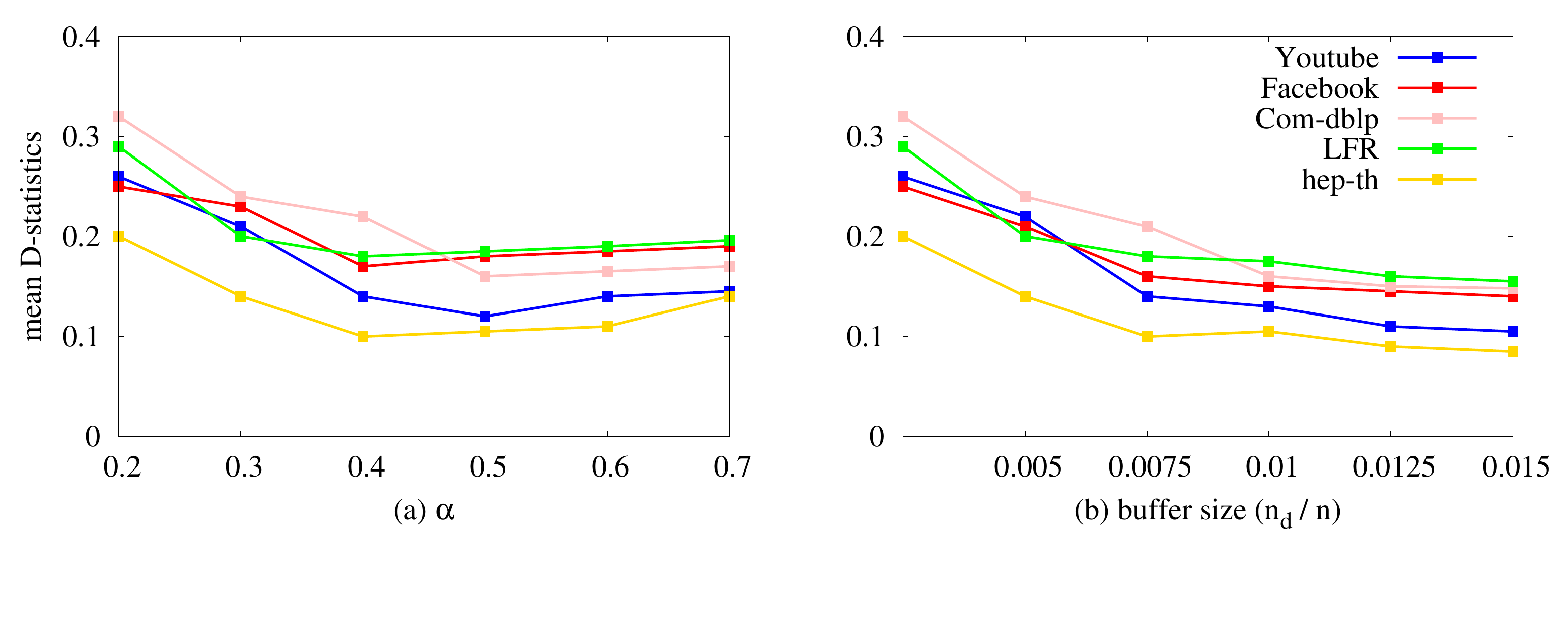}
\vspace{-9mm}
\caption{\label{param_est}Average $D$-statistics value across all the topological measures for various values of $\alpha$ and $n_d$.}
\end{figure}
Since the nodes are labeled, as a second level of evaluation, we use the {\bf community validation metrics} --  Purity~\cite{manning2008introduction}, Normalized Mutual Information (NMI)~\cite{danon2005comparing} and Adjusted Rand Index (ARI)~\cite{hubert1985comparing} to measure the similarity between the ground-truth and the obtained community structures. {\em The more the value of these metrics, the higher the similarity.}

\begin{table}[!t]
\centering
\caption{\label{g_metric_alg}NMI between the ground-truth and community structure obtained from individual sampling algorithms for all datasets.\vspace{2mm}}
\scalebox{0.8}{
\begin{tabular}{l cccccc}
\hline
\multirow{1}{*}{{\bf Dataset}} & \multicolumn{1}{c}{{\bf \compas}} & \multicolumn{1}{c}{{\bf SN}} & \multicolumn{1}{c}{{\bf SE}} & \multicolumn{1}{c}{{\bf SBFS}} & \multicolumn{1}{c}{{\bf PIES}} & \multicolumn{1}{c}{{\bf GA}} \\\hline
Facebook	  & 0.52 & 0.34 & 0.28&0.41&0.48&0.61\\
hep-th		  &0.51&0.32&0.21&0.36&0.39&0.68\\
Youtube		      &0.72&0.49&0.33&0.58&0.51&0.77 \\
dblp		  &0.65&0.28&0.21&0.57&0.39&0.69 \\
LFR		  &0.69&0.29&0.32&0.38&0.31&0.72\\\hline
Average & {\bf 0.61}  & 0.34 & 0.27 & 0.46  & 0.41 & {\bf 0.69}    \\
\hline
\vspace{4mm}
\end{tabular}}
\end{table}
\vspace{2mm}

\noindent{\bf Comparison of sampling algorithms}:
We start by measuring the similarity between the obtained and the ground-truth community structures using topological measures. In Table~\ref{tab_all} we 
summarize the $D$-statistics values of all the scoring functions for the Youtube dataset; for the other graphs we only present the average value (and standard deviation) 
across the $D$-statistics for different topological measures (detailed results on other datasets can be found in the \cite{si}). 
Since GA is specifically designed for static graphs, we simulate GA on the aggregated network consisting of every edge that has arrived, thereby allowing it 
more information compared to the other (streaming) algorithms which never have the whole graph under consideration. 
Clearly~\compas~outperforms all the streaming algorithms across different datasets and conceivably  
GA performs better than~\compas~as apart from utilizing the 
whole network structure, it further utilizes clustering coefficient and Pagerank of each node to obtain the sample.
Further we find \compas~is the second ranked algorithm after GA with an average (over all datasets) purity, NMI and ARI of \fbox{{\bf 0.74}, {\bf 0.61} and {\bf 0.53}} respectively (see Table~\ref{g_metric_alg}  for NMI, details in \cite{si}). Thus, \compas~matches the ground truth community both structurally and in content. 

{Among the rest of the sampling algorithms PIES performs best as it is biased towards the high degree nodes but at no point attempts to maximize modularity or clustering coefficient. The limited observability of graph structure using  a window in case SBFS, renders it ineffective in properly sampling high fidelity nodes.
For SN since nodes are picked uniformly at random the nodes with low degreee are shortlisted.  Similarly for SE, edges are picked uniformly at random and is again not inclined to pick nodes with any specific property. Hence SN and SE perform poorly in the task of preserving community structure.} 

\noindent{\bf Effect of edge ordering and sample size:}\label{sec:effect}
In this section, we show that most of our inferences are valid irrespective of any edge ordering. We randomly pick one pair of edges and swap their arrival time. 
We repeat it for $y$\% of edges (where $y$ varies between 5 and (as high as) 50) present in each aggregated graph. 
For each such ordering we obtain a representative sample (say $G_y$) and compare (average $D$-statistics)  with 
the ground-truth community. 
In figure~\ref{param_est_1}(a) we plot the $D$-statistics value averaged over all the scoring functions for the Youtube dataset. The plot clearly shows that the edge-ordering  
affects the final sample marginally (the pattern is same for other graphs).

Lastly, we present the effect of sample size ($n$) on the obtained community structure. We plot average $D$-statistics values across all the topological measures for all the algorithms on Youtube  (see others in \cite{si}) as a function of $n$ (Figure \ref{param_est_1}(b)). As expected, with the increase of $n$ we obtain better results. Interestingly, for~\compas~and GA, the pattern remains consistent compared to others. Moreover as we increase $n$ the divergence between their performance decreases.\\

\vspace{-3mm}
\begin{figure}[!h]
\centering
\includegraphics[scale = 0.28]{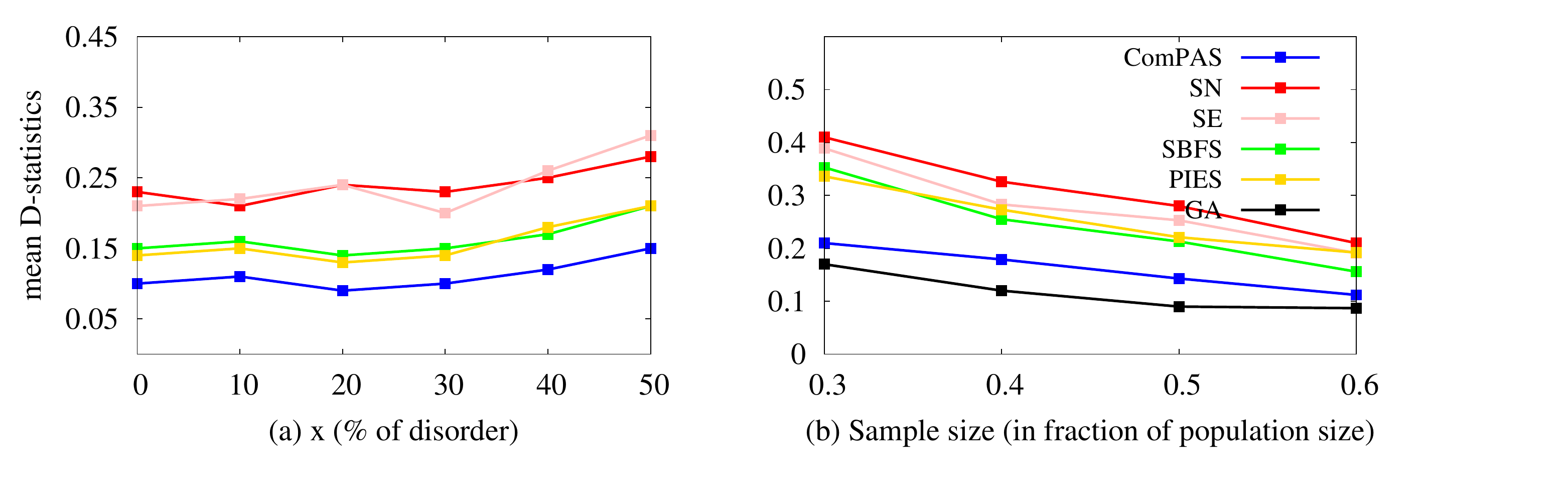}
\vspace{-10mm}
\caption{\label{param_est_1}Average $D$-statistics across all the topological measures for (a) different edge ordering and (b) sample size ($n$) of the Youtube graph.\vspace{-5mm}}
\end{figure}

\section{Complexity analysis}

\begin{table}[]
\centering
\caption{Machine specifications used for experiments.}
\label{tab:spec}
\begin{tabular}{l |l| l| l}
\hline
RAM   & CPU                                                                   & OS               & Cores \\ \hline
64 GB & \begin{tabular}[c]{@{}l@{}}Intel Xeon X5690\\ @ 3.47 GHZ\end{tabular} & Ubuntu 12.04 LTS & 24    \\ \hline
\end{tabular}
\end{table}

We perform two sets of experiments to determine the scalability of the algorithm - (i) dependence on stream size (total number of edges arriving in a single pass of the stream) 
and (ii) dependence on graph size ($N$). 
We stress that the complexity of the algorithm is (almost) linear with the size 
of the stream as at every step we perform certain local operations (depending on the case encountered) namely calculating modularity and clustering coefficient (calculated only
for low degree nodes during deletion).   
As a proof of concept, we consider an LFR graph with 25000 nodes and generate a sample of size 7500 with increasing stream sizes. 
In figure \ref{complexity}(a) we plot the time required for generating the sample. We note the machine specifications in Table \ref{tab:spec}.
We observe a linear behavior which corroborates our hypothesis.  
We further look into dependence on the size of the graph as well. In this regard we consider graphs of increasing sizes and measure the time required to obtain a sample of size 30\% of the population (refer to figure \ref{complexity}(b)). We again observe a linear behavior for the same machine specifications noted in Table~\ref{tab:spec}.
The above results hence indicate that \compas~ is scalable for large graphs as well.   

\begin{figure}
\centering
\includegraphics[scale=.25]{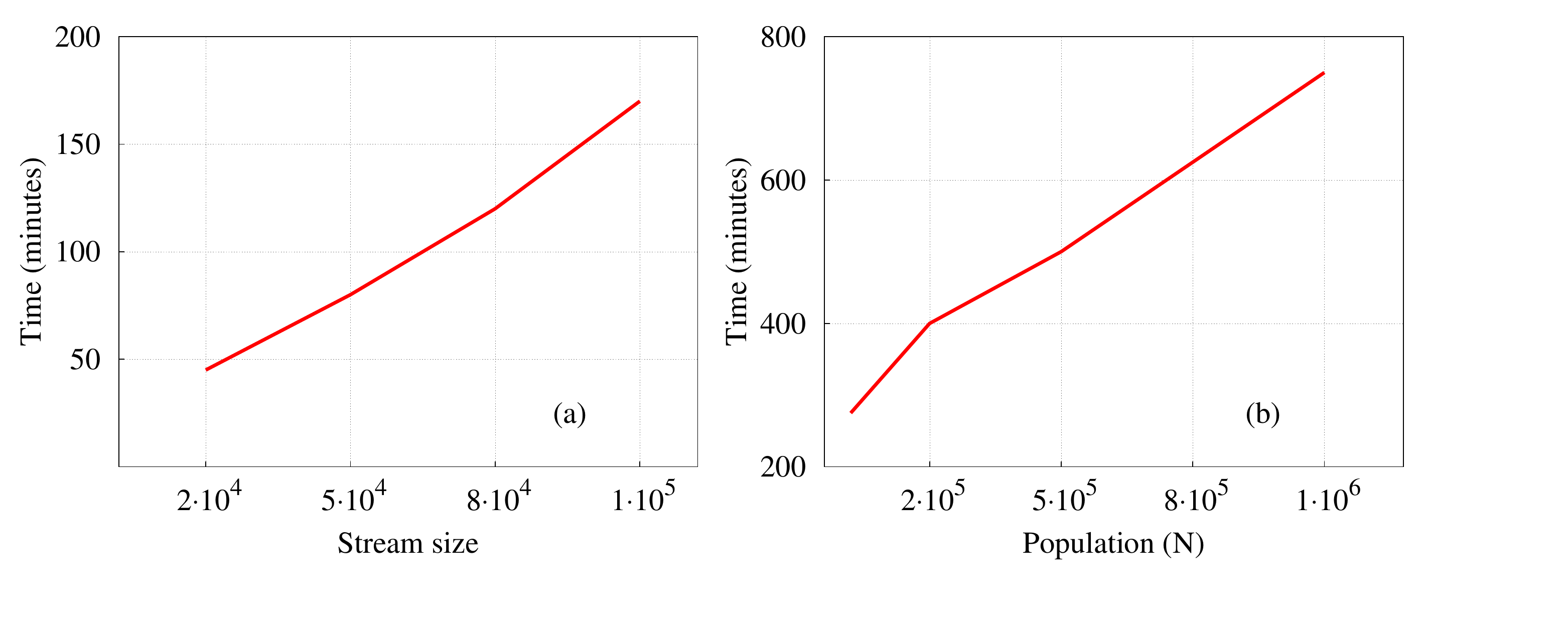}
\vspace{-6mm}
\caption{\label{complexity} Execution time of \compas~with increasing (a) stream size and (b) population size ($N$). A linear behavior is observed.\vspace{3mm}}
\end{figure}

\section{Insights}
In this section, we present certain micro-scale insights illustrating why \compas~outperforms the other algorithms in generating the community structure. \\
(i) \textbf{\compas~ admits high fidelity nodes and improves the modularity of the sample}: We observe how modularity, average clustering coefficient and average degree of the sample change over time as the edges arrive in a stream (refer to figure \ref{fig:discuss}(a)). All these factors increase over time. Here we report the results from the point the sample size ($n$) is reached for the first time up to the end of the stream.\\
\begin{figure}[!h]
\centering
\includegraphics[width=\columnwidth]{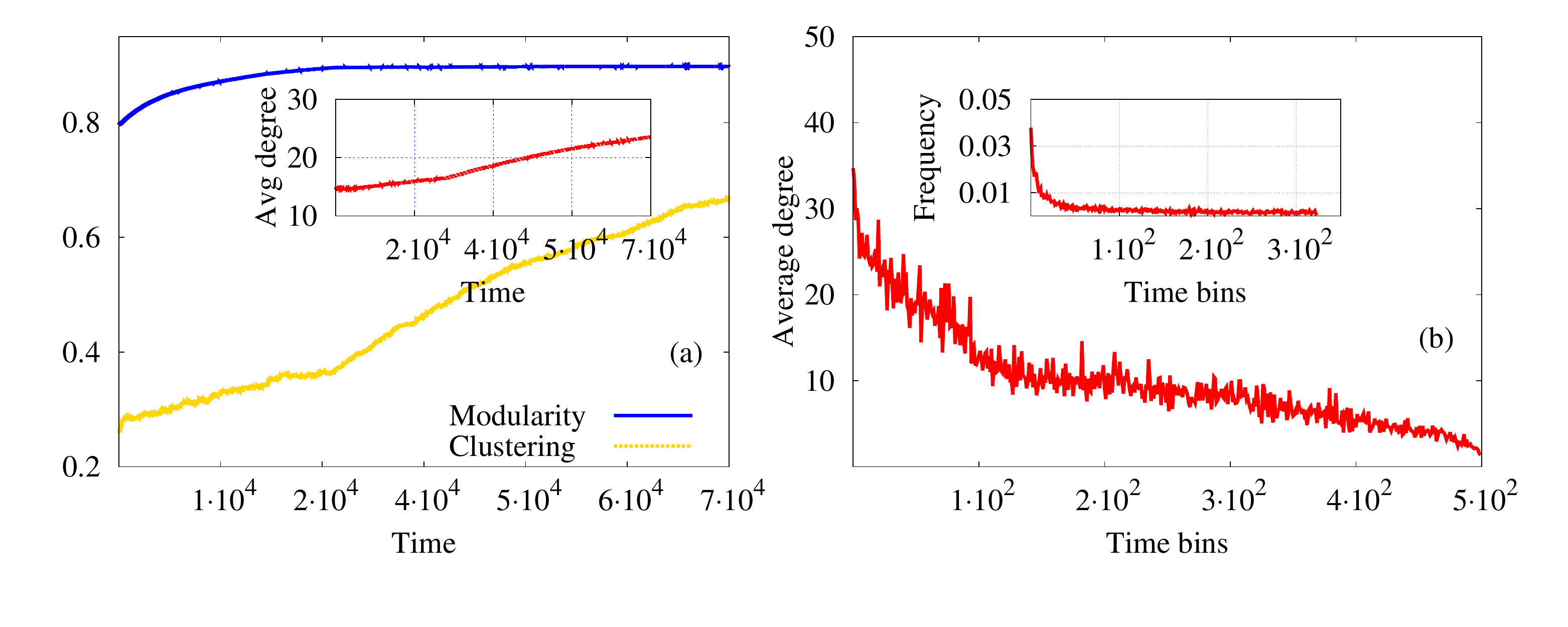}
\vspace{-8mm}
\caption{\label{fig:discuss}(a) Modularity and average clustering coefficient of the sample as it evolves over time, (inset) evolution of average degree of the sample over time.(b) Average degree of nodes in each bin (total time for streaming is divided into 500 equi-sized buckets), (inset) fraction of nodes in each bucket of the sample obtained using \compas.  The experiment is performed on Facebook dataset.\vspace{-2mm}}
\end{figure}
(ii) \textbf{\compas~ retains a large fraction of intra-community edges ensuring a better community structure}:  We observe that intra-community edges in the sample account for $\sim$ 80\% of all the edges while in the original network the corresponding value is $\sim$ 67\%. \\
(iii) \textbf{\compas~ produces a sample that has an edge density which corresponds highly to the original graph}: Note that \compas~is node-based, 
and $G_s$ consists of only those edges which
arrive {\em after} their corresponding nodes appear in $G_s$ - hence an efficient \compas~would 
insert the nodes  as early as possible. We compare the number of edges in $G_s$ against that in the subgraph ($\hat{G}_s$) induced by the sampled nodes in the original graph.
We observe that on average $G_s$ retains $\sim71$\% of the edges of $\hat{G}_s$. 
This indicates that the insertion time of nodes (in $G_s$) compared to their first 
appearance in the stream is early as $G_s$ is 
 able to retain most of the possible edges. \\
(iv) \textbf{\compas~samples high fidelity nodes uniformly over the time stretch}:
\compas~ samples more high fidelity nodes in time stretches where such nodes appear more frequently compared to the other stretches. To this purpose we split the stream into a set of buckets and a node is placed into a bucket based on the time it first arrived and 
calculate the average degree of each bucket (refer to figure \ref{fig:discuss}(b)). We observe that the average degree drops as we move from the first toward the subsequent buckets. We then consider the sample obtained from \compas~and calculate the fraction of sampled nodes in each bucket (figure \ref{fig:discuss}(b)(inset)). We observe a similar pattern indicating that \compas~ is not only able to sample the high degree nodes but the rate of sampling from each is roughly proportional to the average degree of each bucket.\\

\section{Applications of \compas~ in online learning}
In online learning, sometimes memory is limited and it is required to train the model on limited number of instances. One of the important problems in learning is
to judiciously choose the training sample set - a random sampling of edges do not produce a good representative set \cite{werner2012impact}.

We hypothesize that more diverse the chosen set, better would be the performance. \compas~ is useful in such cases since it tries to sample from several communities, hence improving the diversity of the training set. To this end, we consider Wiki-Rfa\footnote{https://snap.stanford.edu/data/wiki-RfA.html} \cite{west2014exploiting}, a streaming signed graph in which nodes represent Wikipedia members and edges (with time-stamp) represent votes. Each vote is typically accompanied by a short comment. The task is to predict the vote (+1, -1) of an incoming edge based on the textual features -- (i) word count, (ii) sentiment value, and (iii) LIWC features of the statement corresponding to the edge. 
Moreover, we can use certain extra features like whether the edge is an intra or inter community edge, the average degree and the clustering coefficient of the nodes connected with an edge etc. to train the model.
We allow training instances to be included till a certain time period $t$ (first 75\% of the edges are allowed to enter) and run the sampling algorithms in parallel. However not all instances can be considered for training due to the memory constraint. 
We assume $n$, the sample size as the allowed training size and obtain sampled training set from individual sampling algorithms. 
The size of the network is 4000 and that of the sample size is 1200 which is 30\% of the population.
We train SVM with linear kernel (see \cite{si} for other classifiers) on each sampled training set, and predict the labels (votes) of those instances coming after $t$. 
Table \ref{comp_wiki} shows that GA and \compas~perform the best in terms of AUC and F-Score. 
 This once again emphasizes  that \compas~selects most representative training instances for (restricted) online learning.
 \begin{table}[!t]
\centering
\caption{\label{comp_wiki} Performance of SVM using the training set obtained from sampling methods.}
\scalebox{0.8}{
\begin{tabular}{c|c|c|c|c|c|c}
\cline{1-7}
 & \compas & SN & SE & SBFS & PIES & GA \\\hline
AUC & \textbf{0.48} & 0.31 & 0.25 & 0.28 & 0.36 & \textbf{0.53} \\
F-Score & \textbf{0.61} & 0.35 & 0.28 & 0.31 & 0.43 & \textbf{0.64} \\
\hline
\end{tabular}}
\vspace{4mm}
\end{table}

\section{Discussion}
\if{0}
 \begin{table}[!t]
\centering
\caption{\label{comp_wiki} Performance of SVM using the training set obtained from sampling methods.\vspace{2mm}}
\scalebox{0.8}{
\begin{tabular}{c|c|c|c|c|c|c}
\cline{1-7}
 & \compas & SN & SE & SBFS & PIES & GA \\\hline
AUC & \textbf{0.46} & 0.31 & 0.25 & 0.28 & 0.36 & \textbf{0.55} \\
F-Score & \textbf{0.59} & 0.35 & 0.28 & 0.31 & 0.43 & \textbf{0.67} \\
\hline
\end{tabular}}
\vspace{2mm}
\end{table}

{\bf Complexity:}  Albeit our algorithm requires calculation of modularity and clustering coefficient at different time steps (depending on the case encountered), it is done only locally over the nodes which are affected by the rearrangements. Moreover clustering coefficient is calculated only over the nodes with low degree since it is used to decide which node to delete from the sample and the candidate nodes are already the ones with minimum degree. Further the degree for each node in the sample is maintained in the dictionary and updated as the stream progresses. Hence it is calculated in constant time. So the whole algorithm executes in linear time over the size of the stream. \noteng{Do we have time calculation from the machine}

\noteng{Predicting link values of a graph; Describe the graph; training set - created by choosing the edges which
become part of the sampled graph; using several features and then predicting - but it is not compared with standard algorithm.  }

\noindent{\bf Application to Online learning:} \noteng{Why this name} 
An immediate application of \compas~would to predict labels of edges in a  network \noteng{please specify} wikipedia administrator election~\cite{west2014exploiting} process 
where a member at certain time stamp casts a supporting or a opposing opinion about a user accompanied by a comment. This can be framed as a streaming graph  with each member representing a node and each vote an edge. The vote pattern (+ve/-ve) is predicted The task is to observe the stream (and thereby train) up to a certain time (first 75\% of the edges are allowed to enter) and subsequently predict the opinion of the future votes with an additional constraint that not all edges can be considered for training (due to limited memory) and the system has to resort to sampling a subset of incoming edges which is performed utilizing the streaming graph sampling algorithms. Apart from leveraging textual features namely word count, sentiment value, and LIWC features, we further consider clustering coefficient and degree of both the users involved as well as the edge type (intra or inter community). In case the node is new the degree is set as 1, clustering coefficient set to 0 and edge is considered as an inter community one. The size of the network is 4000 and that of the sample size (also the training size) is 1200 which is 30\% of the population.
We train SVM with linear kernel as our prediction model and observe that GA and \compas~perform the best in terms of AUC and F-Score (refer to table \ref{comp_wiki}). 
\fi

To conclude, we in this paper proposed \compas, a novel sampling algorithm for streaming graphs which is able to retain the community structure of the original graph. Through rigorous experimentation on real-world and synthetic graphs we showed that \compas~ performs better than four  state-of-the-art graph sampling algorithms. 
We also stress that the complexity of the algorithm is (almost) linear with the size of the graph as at every step we perform certain local operations (depending on the case encountered) namely calculating modularity and clustering coefficient (calculated only for low degree nodes during deletion). 

One of the important problems in learning is
to judiciously choose the training sample set and in this context, we demonstrated that \compas~can be used to shortlist the training sample. 
We would like to point out that, although encouraging, these are initial results. A thorough analysis needs to be done on each individual use-case before strong (and universal) claims can be advocated - this would exactly be our immediate future pursuit. 



\section{Appendix}
\label{appendix}
\subsection{Proof of propositions}

PROPOSITION 1.
{\em
Addition of an edge to a community $c\in C$, increases its modularity if $D_c\leq M-1$ $($where $M=|E|)$.}
\vspace{-2mm}
\begin{proof}
Recall the formulation of modularity as:
\vspace{-2mm}
\begin{equation}\label{modularity}\small
Q(G(V,E),C)=\sum_{c\in C} (\frac{m_c}{M} - \frac{D_c^2}{4M^2})
\vspace{-2mm}
\end{equation}
where $C$ is the community structure of $G$, $m_c$ is the total number of edges inside $c$, $D_c$ is the sum of degree of all the nodes inside a community $c\in C$, and $M=|E|$ is the total number of edges $G$.

From Equation \ref{modularity}, we see the contribution of individual community $c\in C$ in modularity as: $Q_c=\frac{m_c}{M} - \frac{D_c^2}{4M^2}$. 
where $m_c$ is the number of edges inside $c$, $M$ is the total number of edges in the graph, and $D_c$ is the sum of degrees of all the nodes in $c$. 

Addition of a new edge within $c$, the $c$'s contribution of modularity becomes:
\[\small
Q'_c=\frac{m_c+1}{M+1} - \frac{(D_c+2)^2}{4(M+1)^2}
\]

So the increase in modularity is $\Delta Q_c=Q'_c-Q_c$,
\[\small
\begin{split}
\Delta Q_c=&\frac{4M^2-4m_cM^2-4D_cM^2-4m_cM+2D_c^2M+D_c^2}{4(M+1)^2M^2}\\
&\geq \frac{4M^2-6D_cM^2-2D_cM+2D_c^2M+D_c^2}{4(M+1)^2M^2}\\
&\geq\frac{(2M^2-2D_cM-D_c)(2M-D_c)}{4(M+1)^2M^2}\\
&\geq 0
\end{split}
\]
The equality holds if $D_c\leq M-1$. This thus implies $(2M^2-2D_cM-D_c)\geq 0$. This proves the proposition.
\end{proof}

PROPOSITION 2.
{\em Addition of any intra-community edge into a community $c\in C$ would not split into smaller communities.}

\begin{proof}
We will prove this proposition by contradiction.
Assume that once a new intra-community edge is added into $c$, it gets split into $k$ small modules, namely $X_1$, $X_2$, $\cdot$,$X_k$. Let $D_{X_i}$ and $e_{ij}$ be the total degree of nodes inside $X_i$ and number of edges connecting $X_i$ and $X_j$ respectively.

Recall that the contribution of $X_i$ in the modularity value is $Q_{X_i}=\frac{m_{X_i}}{M} - \frac{D_{X_i}^2}{4M^2}$.  Before adding the edge, we have $Q_c \geq \sum_{i=1}^k Q_{X_i}$ (where $Q_c$ is the total modularity of community $c$), because otherwise all $X_i$s can be split earlier, which is not in this case. This implies that:
\[
\frac{m_c}{M}- \frac{D_c^2}{4M^2} > \sum_{i=1}^k (\frac{m_{X_i}}{M} - \frac{D_{X_i}^2}{4M^2})
\]
Since $X_1,X_2,\cdot,X_k$ are all disjoint modules of $c$, $D_c=\sum_{i=1}^k D_{X_i}$ and $m_c=\sum_{i=1}^k m_{X_i} + \sum_{i<j} e_{ij}$. This further implies that:
\[
\frac{m_c}{M}-\sum_{i=1}^k \frac{m_{X_i}}{M} > \frac{D_c^2}{4M^2}- \sum_{i=1}^k \frac{D_{X_i}^2}{4M^2}
\]
or,
\[
\sum_{i<j} e_{ij} > \frac{\sum_{i<j} D_{X_i}D_{X_j}}{2M}
\]
 
 Without loss of generality, let us assume that the new edge is added inside $X_1$. 
Since we assume that after adding the new edge into $c$, it gets split into $k$ small modules, the modularity value should increase because of the split. Therefore, the new modularity $Q'_c<\sum_{i=1}^k Q_{X_i}$. This implies that
\[\small
\begin{split}\small
& Q'_c < \sum_{i=1}^k Q_{X_i}\\
& \Leftrightarrow \frac{\sum_{i=1}^k m_{X_i} + \sum_{i<j} e_{ij} + 1}{M+1} - \frac{(\sum_{i=1}^k D_{X_i +2})^2}{4(M+1)^2}\\
& <  \frac{m_{X_1}+1}{M+1} - \frac{(D_{X_1}+2)^2}{4(M+1)^2}  + \sum_{i=2}^k  (\frac{m_{X_i}}{M+1} - \frac{D_{X_i}^2}{4(M+1)^2} )\\
&\Leftrightarrow \frac{\sum_{i=1}^k m_{X_i} + \sum_{i<j} e_{ij} + 1}{M+1} - \frac{(\sum_{i=1}^k D_{X_i +2})^2}{4(M+1)^2}\\
& < \frac{\sum_{i=1}^k m_{X_i}+1}{M+1} - \frac{(D_{X_1}+2)^2}{4(M+1)^2} - \sum_{i=2}^k \frac{D_{X_i}^2}{4(M+1)^2}\\
&\Leftrightarrow \sum_{i<i} e_{ij} < \frac{\sum_{i=1}^k D_{X_i} - 2 D_{X_1} + \sum_{i<j} D_{X_i}D_{X_j}}{2(M+1)}
\end{split}
\]
Since $\sum_{i=1}^k D_{X_i} - 2D_{X_1} < 2M$, this implies that 
\[\small
\begin{split}
\frac{\sum_{i<j}D_{X_i}D_{X_j}}{2M}  < \sum_{i<j}e_{ij} 
&< \frac{\sum_{i=1}^k D_{X_i} - 2D_{X_1} + \sum_{i\neq j}{D_{X_i}D_{X_j}}}{2(M+1)}\\
& <\frac{\sum_{i<j}D_{X_i}D_{X_j}}{2M}+1
\end{split}
\]
Therefore, the proposition holds.
\end{proof}




\end{document}